\documentclass{article}

\usepackage[utf8]{inputenc} 
\usepackage[T1]{fontenc}    
\usepackage{hyperref}       
\usepackage{url}            
\usepackage{booktabs}       
\usepackage{amsfonts, amsmath, amsthm}       
\usepackage{nicefrac}       
\usepackage[expansion=false]{microtype} 
\usepackage{xcolor, paralist}         
\usepackage{multirow, multicol}
\usepackage{pdflscape}
\usepackage{longtable, float}
\usepackage{graphicx}
\usepackage{geometry, subcaption}
\usepackage{cleveref}
\floatstyle{plain}
\restylefloat{table}

\newtheorem{theorem}{Theorem}
\newtheorem{lemma}[theorem]{Lemma}

\theoremstyle{definition}

\theoremstyle{remark}

\title{EigenLI: Spectral Approximations to Late Interaction}

\author{
Archish S\textsuperscript{1}\thanks{Work done while at Microsoft Research India.},
Sabyasachi Basu\textsuperscript{2}\thanks{Corresponding author.},
Ankit Garg\textsuperscript{2},
Ravishankar Krishnaswamy\textsuperscript{2},\\
Kirankumar Shiragur\textsuperscript{2}
\\
\footnotesize
\texttt{archish17@gmail.com,\{sabyasachi.basu,garga,rakri,kshiragur\}@microsoft.com}
\\
\small
\textsuperscript{1}Temple
\quad
\textsuperscript{2}Microsoft Research India
}
\begin{document}

\maketitle

\begin{abstract}
  Late-interaction models such as ColBERT achieve strong effectiveness by representing each document with many token-level vectors, but this expressivity leads to large indexing cost, storage footprints and expensive MaxSim scoring. We show that late-interaction representations exhibit an intrinsic low-rank structure: document token embeddings concentrate in a low-dimensional subspace that preserves most of the retrieval signal. Leveraging this observation, we introduce EigenLI, a spectral approximation framework that compresses late-interaction representations via document-specific low-dimensional subspaces. Unlike clustering or pooling methods, EigenLI identifies the dominant eigendirections of each document and uses them to construct reduced interaction representations. 
Empirically, $k$-EigenLI with $k \le 32$ outperforms k-means and Ward clustering based pooling methods on ColBERTv2 and AnswerAI-ColBERT-small; GTE-ModernColBERT exhibits a different tradeoff at $k=32$, where clustering methods perform better. The same spectral construction also yields EigenLI-SV, an ANN-compatible single-vector representation derived from the second-order summary of the reduced structure. Across multiple datasets and all three text models, EigenLI-SV consistently outperforms comparable single-vector surrogates such as MUVERA. 

\end{abstract}

\section{Introduction}

Multi-vector retrieval models \cite{khattab2020colbert} (also known as late-interaction models) represent queries and documents as sets of token-level vectors, with relevance score computed using the MaxSim function between the two sets of vectors. These models have yielded significant gains in retrieval quality over single-vector models which represent queries and documents as single vectors and compute relevance score using cosine similarity or dot product. Multi-vector models of much smaller size (in terms of number of model parameters) are able to beat single-vector models of much larger size in a number of domains (e.g. Reason-ModernColBERT on BrowseComp-Plus \cite{lighton2026bloatedretriever}, also see \cite{takehi2025fantastic}). Multi-vector models have been shown to being immune to catastrophic forgetting~\cite{SAGKS26} and generally believed to generalize better to OOD data (e.g. generalization to new languages in \cite{SCMC26}). Moreover, a recent line of work has provably established the fundamental representational limits of single vectors, and the advantages of multi-vector methods~\cite{weller2025, j26, KIZM26, jayaram2026near, agarwal2026retrieval}. However, multiple vectors per query and document lead to increased indexing, storage and latency costs. A lot of work has gone into improving the footprint of multi-vector retrieval systems (via better indexing and quantization strategies) including \cite{santhanam2022colbertv2, santhanam2022plaid, lee2023rethinking, scheerer2025warp} and also into standardizing and designing efficient training strategies for multi-vector models \cite{chaffin2025pylate, chaffin2026colbert, clavie2025simple}. Despite this, the concerns of large indexing and storage costs for multi-vector retrieval systems remain. To further improve efficiency, it is therefore crucial to reduce the number of vectors used in representing a document.

Multi-vector models are also playing a crucial role in the Visual Document Retrieval domain where queries are usually text but on the document side we have images/screenshots\cite{faysse2024colpali, moreira2026nemotron, kolavi2025m3dr, gunther2025jina}. Due to patch level tokenization, these models represent documents using an even larger number of vectors (usually around 1000 per document). This further motivates the question of reducing the representation size (in terms of number of vectors) of multi-vector retrieval models. 

Strategies for compressing multi-vector representations can be broadly categorized into two parts: ones that require training and ones that don't. Among approaches that require training include \cite{macavaney2025efficient, xiao2025metaembed, qin2026multi}. In contrast, training-free strategies, primarily focus on pruning and pooling strategies \cite{lassance2021study, acquavia2023static, he2025token, clavie2024reducing, yan2025docpruner, yan2026sculpting, yan2026visual}. Among these, k-means and hierarchical (Ward linkage) clustering based pooling approaches are known to perform best among all the pruning and pooling approaches \cite{clavie2024reducing, jha2026brief}.

\begin{figure}
    \centering
    \includegraphics[width = 0.9\textwidth]{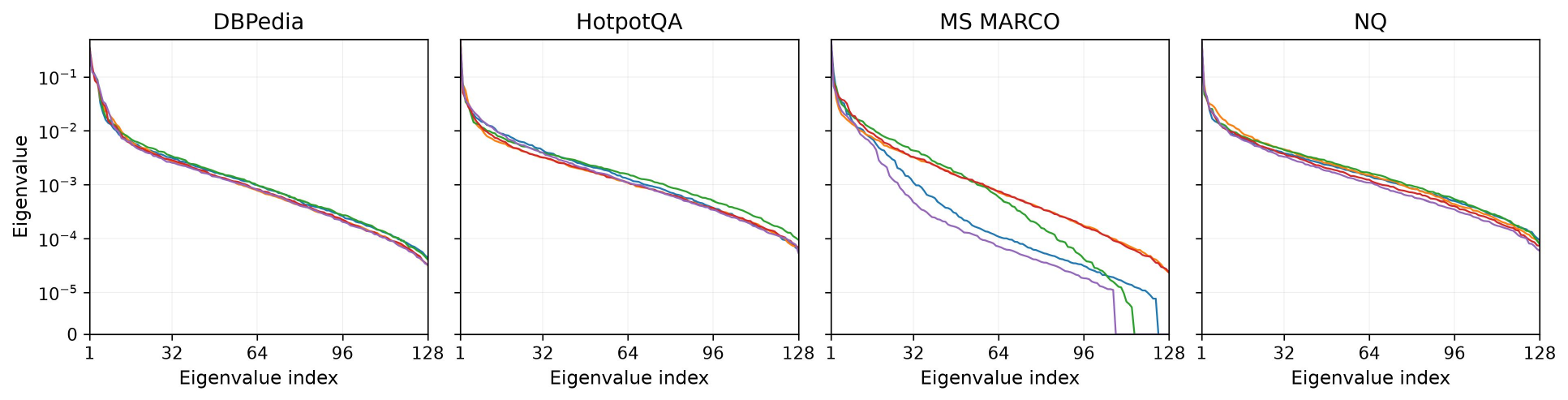}
    \includegraphics[width = 0.9\textwidth]{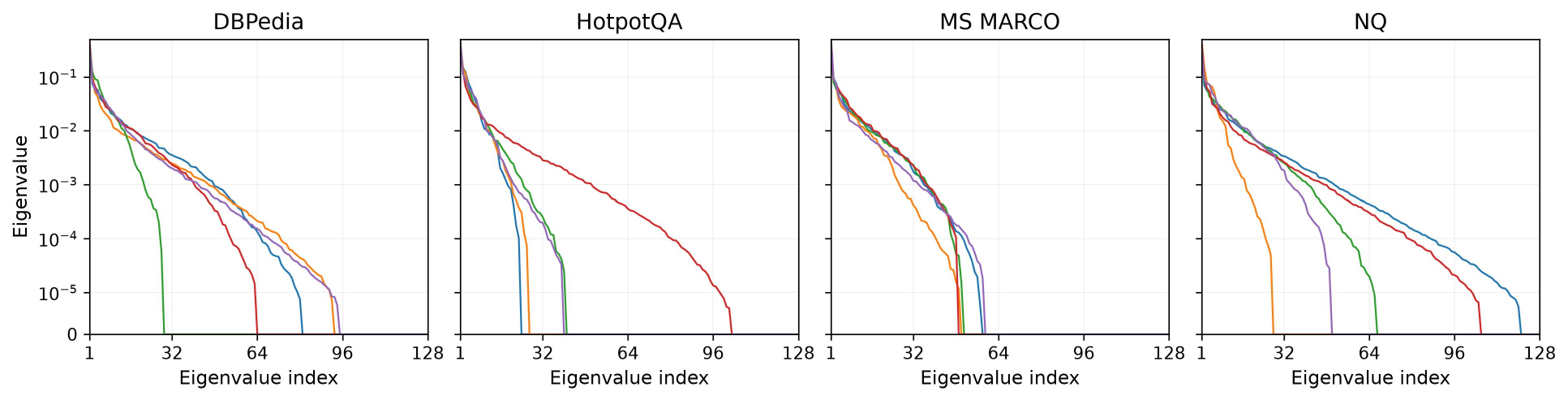}
    \caption{The spectra of the second moment matrix (normalized) of five longest (top) and randomly chosen (bottom) documents for four datasets. Each line refers to a different document.}\label{fig:eigens}
\end{figure}

\subsection{Our Contributions}

Our main contribution is a new training-free approach for reducing the number of vectors for representing a document together with a new scoring function different from MaxSim, which we call \emph{EigenLI}, named so because we use the spectral properties of late-interaction representations. 

\subsubsection{$k$-EigenLI}

The starting point of the new compression strategy is the empirical observation that in multi-vector retrieval models, the vectors in the multi-vector embedding of a document approximately lie in a low-dimensional subspace. Suppose $(d_1,\ldots, d_m)$ are the vectors representing a document embedded by a multi-vector retrieval model such as ColBERTv2. We observe that the matrix $\sum_i d_i d_i^T$ is approximately low rank and has rapidly decaying eigenvalues, as seen in Figure~\ref{fig:eigens}. This motivates the following compression strategy. We compute the top-$k$ eigenvectors $w_1,\ldots, w_k$ of the document second-moment matrix $\sum_i d_i d_i^T$ and use them to represent the document. Given the multi-vector embedding of a query $(q_1, \ldots, q_n)$, we define the score $s := \sum_{i, j} \langle q_i, w_j \rangle^2$. 

Note that we do not compute MaxSim between the query vectors and the eigenvectors of the second moment matrix. Instead, we compute how much of the mass of each query token-vector projects onto the top eigenspace of the document second-moment matrix. There are alternative formulations of the scoring function which motivate it even better. Suppose $\Pi$ is the projector onto the subspace spanned by $w_1,\ldots, w_k$. Then $s = \sum_i \|\Pi q_i\|_2^2$. Alternatively let $S_D$ be the subspace spanned by $w_1, \ldots, w_k$. Then $s = \sum_i \max_{v\in S_D, \|v\|_2 = 1} \langle q_i, v \rangle^2$. This relates our new scoring function to MaxSim (see also Section \ref{sec:subspace_maxsim}). A novel aspect of our compression strategy is that we compress the set-of-vectors representation of a document to a subspace-of-vectors representation whereas all the previous compression strategies try to compress the set-of-vectors representation to a smaller sized set-of-vectors representation. 

\subsubsection{EigenLI-SV}

Another salient feature of our compression strategy and scoring function is that it can be represented exactly using a high-dimensional single-vector dot product. Note that $s = \sum_{i, j} \langle q_i, w_j \rangle^2 = \langle \sum_i q_i \otimes q_i, \sum_j w_j \otimes w_j \rangle$. This gives a $d^2$-dimensional single-vector encoding ($d$ is the token-vector dimension of the multi-vector model). We can further use the quadratic kernel $K(x) := (x_1^2,\ldots, x_n^2, \ldots, \sqrt{2} x_i x_j, \ldots)$ to reduce the dimension to $d(d+1)/2$. When the token-vector dimension is $d = 128$ (e.g., for ColBERTv2), we get an $8256$-dimensional single-vector encoding. Note that the dimension of the single vector is independent of $k$, the number of eigenvectors we choose. We compare performance against $10240$-dimensional FDEs produced by MUVERA \cite{dhulipala2024MUVERA}. For ColBERTv2 on the 13 BEIR development sets, EigenLI-SV at $k = 32$ improves nDCG@10 over MUVERA by $69.6\%$ under the geometric mean and $78.9\%$ under the arithmetic mean of the 13 per-dataset performance ratios, expressed as a relative improvement. The EigenLI-SV dimension is $8256$, smaller than the MUVERA encoding dimension of $10240$. Single-vector encodings can use highly optimized approximate nearest-neighbor systems \cite{jegou2011product,ge2013optimized,malkov2020hnsw,johnson2019faiss,subramanya2019diskann,guo2020scann,chen2021spann,singh2021freshdiskann}.

\subsubsection{Model and Domain Coverage}

We evaluate text models from multiple families on BEIR and a Qwen-based visual model on ViDoRe-v3. The results are model dependent. For ColBERTv2 on BEIR at $k=32$, the arithmetic mean over the 13 per-dataset relative improvements in nDCG@10 is $15.0\%$ over $k$-means++ and $8.0\%$ over Ward pooling. AnswerAI-ColBERT-small shows the same qualitative trend, whereas GTE-ModernColBERT favors clustering at $k=32$ on several metrics. For ColQwen3 4B on ViDoRe-v3 at $k=32$, the arithmetic mean over the 8 per-dataset relative improvements in nDCG@10 is $5.5\%$ over $k$-means++ and $3.4\%$ over Ward. Performance can decline when $k$ becomes a large fraction of the token-vector dimension, as the selected subspaces become less discriminative. We observe this at $k=64$ for ColBERTv2 (token dimension 128) and $k=128$ for ColQwen3 (token dimension 320). These results motivate model-aware selection of $k$, which we discuss in \Cref{sec:discussion}.

\section{Related Work}

We survey the various compression strategies for reducing the number of vectors in multi-vector representations. \cite{lassance2021study, acquavia2023static} explored IDF based pruning, attention based pruning and other simple heuristics. \cite{he2025token} proposed a learned mechanism for pruning unimportant token vectors. \cite{clavie2024reducing} proposed clustering based pooling strategies for reducing number of vectors and empirically showed that hierarchical clustering (with Ward linkage) works better than k-means on average across BEIR datasets (this is something we see in our experiments too). \cite{yan2025docpruner} introduced an attention-score based pruning mechanism for Visual Document Retrieval (VDR) multi-vector models. \cite{yan2026sculpting} introduced a two-stage prune-then-merge framework for improving the performance of merging (pooling) based compression techniques for VDR models. \cite{yan2026visual} add a positional embedding component to the final patch-level embeddings in VDR multi-vector models and then apply hierarchical clustering based pooling. \cite{jha2026brief} survey the various training-free compression strategies and find that k-means and Ward clustering based pooling methods work the best.

Among the compression methods that require training, \cite{macavaney2025efficient} introduced a learned linear pooling layer that learns to pool multiple vectors into smaller number of vectors. \cite{xiao2025metaembed} explored the use meta/memory tokens whose output vectors act as summary vectors for the document and also introduced a Matryoshka type loss that helps the model learn multi-vector representations with varying number of vectors. \cite{qin2026multi} propose a method that uses meta/memory token vectors to assign attention scores and find the important token vectors for a document and cluster and pool around these important token vectors. They show that their method outperforms the previous two training-based compressed strategies. \cite{josef2026learn, chaffin2026hpoolregularization} explore learnt kmeans/Ward based pooling strategies.

Principal Component Analysis (PCA) is of course a well known technique that uses singular value decomposition (SVD) for dimensionality reduction. In the context of retrieval and embeddings, it has been used to reduce the dimension of single-vector embeddings \cite{siciliano2024static}. The natural use of PCA for compressing multi-vector embeddings would be to reduce the dimension of each token vector (project onto principal components). However, we use this technique to reduce the number of vectors representing a document by focusing on the top-$k$ eigenspace of the token-vector second moment matrix and representing the document using this eigenspace. The fact that this can be done on a per document basis is surprising. Moreover, in multi-vector compression, compression alone is not sufficient and we need to see how queries interact with the compressed representations to model relevance. We introduce a new simple scoring function as well. 

There is literature on representing documents as subspaces (e.g. \cite{piwowarski2010filtering, zuccon2009semantic, caputo2011query, piwowarski2009quantum}) but the use cases, methodology and techniques are very different from ours.

\section{Method: $k$-EigenLI}

Our primary contribution is \textbf{$k$-EigenLI}, a spectral method for reducing the cost of late-interaction retrieval while preserving most of its effectiveness. The key idea is to replace the original token-vector representation of each document with a document-specific low-dimensional subspace.

\paragraph{Setup.}
Let a document be represented by token vectors
$$
D = \{d_1,\dots,d_m\} \subset \mathbb{R}^d,
$$
and query be represented by token vectors
$$
Q = \{q_1,\dots,q_n\} \subset \mathbb{R}^d.
$$
As usual, we assume all the token vectors have $L_2$ norm $1$.

\paragraph{Document-specific spectral subspace.}
Our starting point is that the token embeddings of a document often lie close to a lower-dimensional subspace of the original embedding space (see Figure ~\ref{fig:eigens}). Rather than compressing the document by clustering tokens into a small number of representatives, we compute the principal eigenvectors of the second moment matrix of the document token vectors. Take $M_D := \sum_{i=1}^m d_i d_i^T$ and let $w_1,\ldots, w_k$ denote the top eigenvectors of $M_D$ (corresponding to the highest $k$ eigenvalues). We define the score between a query and document as follows:
$$
s_k(Q,D) := \sum_{i = 1}^n \sum_{j = 1}^k \langle q_i, w_j\rangle^2
$$

The following lemma gives some equivalent formulations of the above scoring function.

\begin{lemma}\label{lem:scorealt}
Let $w_1, \ldots, w_k \in \mathbb{R}^d$ be a set of orthonormal vectors and let $\Pi := \sum_{i=1}^k w_i w_i^T \in \mathbb{R}^{d \times d}$ be the projector onto the subspace spanned by $w_1,\ldots, w_k$. Also let $S$ be the subspace of $\mathbb{R}^d$ spanned by $w_1, \ldots, w_k$. Then for any vectors $q_1, \ldots, q_n$
$$
\sum_{i = 1}^n \sum_{j = 1}^k \langle q_i, w_j\rangle^2 = \sum_{i=1}^n \| \Pi q_i\|_2^2 = \sum_{i=1}^n \max_{v\in S, \| v\|_2 = 1} \langle q_i, v \rangle^2
$$
\end{lemma}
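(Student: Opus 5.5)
The plan is to prove the two equalities in Lemma~\ref{lem:scorealt} separately, one query vector at a time. Since all three expressions are sums over $i$ of a quantity depending only on $q_i$, it suffices to fix a single vector $q\in\mathbb{R}^d$ and show
$$
\sum_{j=1}^k \langle q, w_j\rangle^2 = \|\Pi q\|_2^2 = \max_{v\in S,\ \|v\|_2=1} \langle q, v\rangle^2 ,
$$
and then sum over $i=1,\ldots,n$.

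\emph{First equality.} Write $\Pi q = \sum_{j=1}^k \langle w_j, q\rangle w_j$. Expanding $\|\Pi q\|_2^2 = \langle \Pi q, \Pi q\rangle$ and using orthonormality $\langle w_j, w_l\rangle = \delta_{jl}$ kills the cross terms and leaves $\sum_j \langle q, w_j\rangle^2$. Equivalently, one can use $\Pi^T\Pi = \Pi^2 = \Pi$ (again from orthonormality) to get $\|\Pi q\|_2^2 = q^T \Pi q = \sum_j (w_j^T q)^2$.

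\emph{Second equality.} Any unit $v\in S$ satisfies $\Pi v = v$, and $\Pi$ is symmetric. Hence
$$
\langle q, v\rangle = \langle q, \Pi v\rangle = \langle \Pi q, v\rangle .
$$
Cauchy--Schwarz then gives $\langle q, v\rangle^2 \le \|\Pi q\|_2^2$, which is the upper bound on the maximum. For attainment:
\begin{itemize}
\item if $\Pi q \neq 0$, take $v = \Pi q/\|\Pi q\|_2$, which lies in $S$ and has unit norm, and gives $\langle q,v\rangle^2 = \langle \Pi q, v\rangle^2 = \|\Pi q\|_2^2$;
\item if $\Pi q = 0$, every unit $v\in S$ gives $\langle q,v\rangle = 0$, so both sides are $0$.
\end{itemize}
The maximum exists because it is taken over the compact unit sphere of $S$ (assuming $k\ge 1$, so this sphere is nonempty).

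The only step needing care is this last one: handling the degenerate case $\Pi q = 0$ and checking that the maximizer lies in $S$. Everything else is routine linear algebra using orthonormality of the $w_j$.
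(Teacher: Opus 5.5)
Your proof is correct and follows the paper's argument essentially step for step: $q^T\Pi q = \|\Pi q\|_2^2$ gives the first equality; for the second, $\langle q, v\rangle = \langle \Pi q, v\rangle$ for unit $v \in S$, followed by Cauchy--Schwarz and attainment at $v = \Pi q/\|\Pi q\|_2$. (The paper gets $\langle q, v\rangle = \langle \Pi q, v\rangle$ from the orthogonal decomposition $q = \Pi q + (I-\Pi)q$; you get it from $\Pi v = v$ and the symmetry of $\Pi$.) Your explicit treatment of the case $\Pi q = 0$, where the paper's choice of maximizer is undefined, is a small improvement over the paper's version.
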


\begin{proof}
For any vector $q$, 
\begin{align*}
    \sum_{j = 1}^k \langle q, w_j \rangle^2 &= \sum_{j = 1}^k q^T w_j w_j^T q \\
    &=  q^T \left(\sum_{j = 1}^k w_j w_j^T \right) q\\
    &= q^T \Pi  q \\
    &= \|\Pi q\|_2^2
\end{align*}
For the second equality, first consider any vector $v$ s.t. $v \in S$ and $\|v\|_2 = 1$. Then
$$
\langle q, v \rangle = \langle \Pi q, v \rangle + \langle (I - \Pi) q, v \rangle
$$
Since $v \in S$ and $(I - \Pi) q$ lies in a subspace orthogonal to $S$, $\langle (I - \Pi) q, v \rangle = 0$. Hence $\langle q, v \rangle = \langle \Pi q, v \rangle$. Now by Cauchy-Schwarz,
$$
\langle \Pi q, v \rangle^2 \le \|\Pi q\|_2^2 \|v\|_2^2 = \|\Pi q\|_2^2
$$
To see equality, set $v = \Pi q/\|\Pi q\|_2$. With this value of $v$, 
$$
\langle q, v \rangle^2 = \langle q, \Pi q \rangle^2/|\Pi q\|_2^2 = |\Pi q\|_2^4/|\Pi q\|_2^2 = |\Pi q\|_2^2
$$
This completes the proof.
\end{proof}

Most modern approaches rely on clustering tokens vectors to the desired budget; our approach presents a systematic alternative that leverages the properties exhibited by multi-vector representations. Moreover, our approach exhibits substantial gains compared to the best known multi-vector compression strategies, which we demonstrate over a wide range of text and visual document retrieval datasets in Section \ref{sec:experiments}.

\subsection{EigenLI-SV: A Single-Vector Derivative of k-EigenLI}

A salient feature of our scoring function and compression technique is that, the same spectral construction also yields a strong single-vector representation. This inherently makes for an important bridge with existing retrieval systems, that are typically graph based approximate nearest neighbor search algorithms that run on single vector embeddings. We refer to this variant as \textbf{EigenLI-SV}.

Note that our scoring function is 
$$
s_k(Q,D) = \sum_{i = 1}^n \sum_{j = 1}^k \langle q_i, w_j\rangle^2
$$
where $(q_1, \ldots, q_n)$ is the multi-vector embedding of a query and $w_1, \ldots, w_k$ are the top $k$ eigenvectors of $\sum_{i = 1}^m d_i d_i^T$, where $(d_1,\ldots, d_m)$ is the multi-vector embedding of a document. We can rewrite the scoring function as 
$$
s_k(Q,D) = \left\langle \sum_{i=1}^n q_i \otimes q_i, \sum_{j=1}^k w_j \otimes w_j \right\rangle
$$
This immediately gives a $d^2$-dimensional single-vector encoding ($d$ is the dimension of each token vector) for queries and documents such that the dot product captures our scoring function. We can reduce the dimension by a factor of $2$ using the quadratic polynomial kernel. Let $K: \mathbb{R}^d \rightarrow \mathbb{R}^{d(d+1)/2}$ be the map $K(x) := (x_1^2,\ldots, x_n^2, \ldots, \sqrt{2} x_i x_j, \ldots)$. Then we can see that
$$
s_k(Q,D) = \left\langle \sum_{i=1}^n K(q_i), \sum_{j=1}^k K(w_j) \right\rangle
$$
This gives a $d(d+1)/2$-dimensional single-vector encoding (independent of $k$) which captures our scoring function. This is useful for models which have token-vector embedding dimension $d = 128$, as is the case for ColBERTv2. Here we get an $8256$-dimensional single-vector encoding. While the embedding dimension is large, using the same data types, this is still almost $8$ times smaller than a representation that uses 512 128-dimensional vectors. The encoding is mathematically score-equivalent to the $k$-EigenLI compressed multivector from which it is derived; in our implementation, the resulting metrics agree within a numerical tolerance of $10^{-3}$. Moreover, such single vector representations still compare quite favorably against the indexing and storage costs of multi-vector models as also seen in the case of MUVERA \cite{dhulipala2024MUVERA}. Note that the $32$-EigenLI representation contributes a total of 4096 dimensions across all vectors, so the single-vector representation is approximately twice as large in raw dimensionality. However they are compatible with ANN systems and dot-product computation has much lower computation cost as opposed to Maxsim or our scoring function for EigenLI which require many pairwise dot products. In Section \ref{sec:experiments}, we compare EigenLI-SV against the $10240$-dimensional FDEs of MUVERA.

\subsection{Relating MaxSim to our score function}\label{sec:subspace_maxsim}

Here we provide some mathematical intuition as to how our score function is related to MaxSim. For a fixed query and document, let the query multi-vector embedding be $(q_1, \ldots, q_n)$ and document multi-vector embedding be $(d_1, \ldots, d_m)$. Let $w_1, \ldots, w_k$ be the top $k$-eigenvectors of $\sum_{i=1}^m d_i d_i^T$. Let $\lambda_1, \ldots, \lambda_d$ be the eigenvalues of $\sum_{i=1}^m d_i d_i^T$ (in descending order). Let $\Pi_k$ denote the projection onto the subspace spanned by $w_1, \ldots, w_k$. Then we have the following lemma:
\begin{lemma}\label{lem:pca}
$$
\sum_{i=1}^m \|d_i - \Pi_k d_i\|_2^2 = \sum_{r \ge k+1} \lambda_r
$$
In particular, if $(1- \epsilon)$-fraction of the mass is concentrated in the top-$k$ eigenvalues, i.e. $\sum_{r = 1}^k \lambda_r \ge (1-\epsilon) \left(\sum_{r=1}^d \lambda_r\right)$, then
$$
\frac{1}{m} \sum_{i=1}^m \|d_i - \Pi_k d_i\|_2^2 \le \epsilon
$$
\end{lemma}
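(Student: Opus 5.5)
The plan is to rewrite the residual sum as a trace and then read it off in the eigenbasis of $M_D := \sum_{i=1}^m d_i d_i^T$. Since $\Pi_k$ is an orthogonal projector, $I-\Pi_k$ is also symmetric and idempotent. Hence for each $i$,
\[
\|d_i - \Pi_k d_i\|_2^2 = d_i^T (I-\Pi_k)^T (I-\Pi_k) d_i = d_i^T (I-\Pi_k) d_i = \mathrm{tr}\bigl((I-\Pi_k) d_i d_i^T\bigr).
\]
Summing over $i$ and using linearity of the trace gives $\sum_{i=1}^m \|d_i - \Pi_k d_i\|_2^2 = \mathrm{tr}\bigl((I-\Pi_k) M_D\bigr)$.

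To evaluate this trace, take an orthonormal eigenbasis $w_1,\ldots,w_d$ of the symmetric PSD matrix $M_D$ with $M_D w_r = \lambda_r w_r$ and the $\lambda_r$ in descending order. Its first $k$ elements are the vectors defining $\Pi_k$. Then $I - \Pi_k = \sum_{r \ge k+1} w_r w_r^T$, and so
\[
\mathrm{tr}\bigl((I-\Pi_k)M_D\bigr) = \sum_{r\ge k+1} w_r^T M_D w_r = \sum_{r \ge k+1}\lambda_r .
\]
This is the first claim. The only subtle point is the case of repeated eigenvalues, where the top-$k$ eigenvectors are not unique. This causes no trouble: whichever orthonormal top-$k$ eigenvectors are chosen, they extend to a full orthonormal eigenbasis, and the computation above goes through unchanged.

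For the second claim, note that $\sum_{r=1}^d \lambda_r = \mathrm{tr}(M_D) = \sum_{i=1}^m \|d_i\|_2^2 = m$, because the token vectors have unit norm. The hypothesis then gives $\sum_{r \ge k+1}\lambda_r \le \epsilon \sum_{r=1}^d \lambda_r = \epsilon m$. Dividing by $m$ yields the bound.

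I expect no step to be a real obstacle. The main thing to handle carefully is the non-uniqueness of the eigenvectors, and the argument above already covers it.
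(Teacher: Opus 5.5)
Your proof is correct and takes essentially the same route as the paper. You write the residual as $\mathrm{tr}\bigl((I-\Pi_k)M_D\bigr)$, read off the tail eigenvalues in the eigenbasis of $M_D$, and use $\mathrm{tr}(M_D)=m$ from the unit-norm assumption; your remark about repeated eigenvalues is a small clarification that the paper leaves implicit.
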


\begin{proof}
\begin{align*}
\sum_{i=1}^m \|d_i - \Pi_k d_i\|_2^2 &= \sum_{i=1}^m \text{tr} \left[ (I - \Pi_k) d_i d_i^T (I - \Pi_k)\right]\\
&= \text{tr} \left[ (I - \Pi_k) \left(\sum_{i=1}^m d_i d_i^T\right) (I - \Pi_k)\right] \\
&= \sum_{r \ge k+1} \lambda_r
\end{align*}
The third equality follows from the fact that $\Pi_k$ is projector onto the top-$k$ eigenvectors of $\sum_{i=1}^m d_i d_i^T$. Since all the $d_i$'s are unit norm, it follows that 
$$
\sum_{r = 1}^d \lambda_r = \text{tr} \left( \sum_{i=1}^m d_i d_i^T \right) = m
$$
Hence 
$$
\frac{1}{m} \sum_{i=1}^m \|d_i - \Pi_k d_i\|_2^2 = \frac{\sum_{r \ge k+1} \lambda_r}{\sum_{r=1}^d \lambda_r} \le \epsilon
$$
\end{proof}
This shows that, on average, document token embeddings lie close to the top-$k$ subspace, which explains why the subspace-based scoring performs extremely well in practice. We now relate a variant of our score function with MaxSim. Note that our score function is
$$
s_k(Q,D) = \sum_{i = 1}^n \sum_{j = 1}^k \langle q_i, w_j\rangle^2
$$
and the MaxSim scoring function is
$$
\sum_{i = 1}^n \max_{j = 1}^m \langle q_i, d_j \rangle
$$
By Lemma \ref{lem:scorealt}, we can write our scoring function as
$$
s_k(Q,D) = \sum_{i = 1}^n \max_{v \in S, \|v\|_2 = 1} \langle q_i, v \rangle^2
$$
where $S$ is the subspace spanned by $w_1, \ldots, w_k$. Consider a close variant where
$$
s'_k(Q,D) = \sum_{i = 1}^n \max_{v \in S, \|v\|_2 = 1} \langle q_i, v \rangle
$$
Note that if $(1-\epsilon)$-fraction of the mass is concentrated in top-$k$ eigenvalues, then by Lemma \ref{lem:pca}, $\frac{1}{m} \sum_{i=1}^m \|d_i - \Pi_k d_i\|_2^2 \le \epsilon$. This leaves the room for some $d_i$'s to have large deviations, but for this analysis, we would need a stronger assumption that for all $i$, $\|d_i - \Pi_k d_i\|_2\le \delta$. Now let $q_i$ is mapped to document token-vector $d_{\tau(i)}$ under MaxSim. Then
\begin{align*}
    s'_k(Q,D) &= \sum_{i = 1}^n \max_{v \in S, \|v\|_2 = 1} \langle q_i, v \rangle \\
              &\ge \sum_{i = 1}^n \langle q_i, \Pi_k d_{\tau(i)}/\|\Pi_k d_{\tau(i)}\|_2 \rangle \\
              &\ge \sum_{i = 1}^n \langle q_i, \Pi_k d_{\tau(i)} \rangle\\
              &\ge \sum_{i = 1}^n \langle q_i, d_{\tau(i)}\rangle - n\delta \\
              &= \sum_{i = 1}^n \max_{j = 1}^m \langle q_i, d_j \rangle - n \delta
\end{align*}
First inequality follows from the fact that $\Pi_k$ projects any vector into the subspace $S$. Second inequality follows from the fact that $\Pi_k$ is a projection matrix and hence $\|\Pi_k d_{\tau(i)}\|_2 \le 1$. Third inequality follows from the fact that $\|d_\tau(i) - \Pi_k d_\tau(i)\|_2\le \delta$. And the final equality follows from the definition of $\tau$.

This provides some mathematical intuition as to why our score function approximates the score function. If a document is assigned high score under maxsim, it would also be assigned high score using our scoring function. In the other direction, we expect that for an irrelevant document, the corresponding document subspace is unrelated to the query token vectors and hence would get assigned a low score.

An important point about our scoring function is that we don't use the eigenvalues. An alternative scoring function is $\sum_{i = 1}^n \sum_{j = 1}^k \lambda_j \langle q_i, w_j \rangle^2$. However, in our preliminary experiments, this approach consistently underperformed our existing subspace-based technique. The reason is the following
\begin{align*}
    \sum_{i = 1}^n \sum_{j = 1}^k \lambda_j \langle q_i, w_j \rangle^2 = \sum_{i = 1}^n q_i^T \left( \sum_{j=1}^k \lambda_j w_j w_j^T \right) q_i
\end{align*}
$\sum_{j=1}^k \lambda_j w_j w_j^T$ is a low rank approximation of $\sum_{\ell=1}^m d_\ell d_\ell^T$ and hence 
\begin{align*}
    \sum_{i = 1}^n q_i^T \left( \sum_{j=1}^k \lambda_j w_j w_j^T \right) q_i \sim  \sum_{i = 1}^n q_i^T \left( \sum_{\ell=1}^m d_\ell d_\ell^T \right) q_i = \sum_{i = 1}^n \sum_{\ell = 1}^m \langle q_i, d_\ell \rangle^2 
\end{align*}
And the above scoring function performs poorly compared to maxsim since it is summing up the similarities between all query and document token-vectors. Hence dropping the eigenvalues is important to closely mimic the MaxSim scoring function.

\section{Experiments}\label{sec:experiments}

In this section, we evaluate the performance of EigenLI as a drop-in compression technique for late-interaction retrieval models. We perform extensive evaluation across MS MARCO passages, BEIR datasets, and the ViDoRe-v3 multimodal benchmark. On BEIR, we evaluate ColBERTv2, AnswerAI-ColBERT-small, and GTE-ModernColBERT; the main text gives model-level summaries, and the appendix gives detailed per-dataset results for all three models. ColBERTv2 documents are represented using 512 vectors of 128 dimensions each, while queries use 32 vectors. For the ViDoRe-v3 benchmark, we use ColQwen3 VL 4B model where documents use $\sim 1250$ vectors of 320 dimensions each, while queries use 45 vectors.


We report truncated Recall, nDCG, and MRR at cutoffs 10, 100, and 1000 across all BEIR datasets. For ViDoRe-v3, we report Recall and nDCG only at @10 and @100 due to the relatively smaller dataset size. 

\paragraph{Implementation details:} 

We use the publicly available ColBERTv2, AnswerAI-ColBERT-small, and GTE-ModernColBERT weights from Hugging Face. For ColQwen, we use the publicly available Hugging Face checkpoint TomoroAI/tomoro-colqwen3-embed-4b, a 4B-parameter ColQwen3 multimodal late-interaction model. The reproduced BEIR sweep uses four B200 GPUs, with one task per GPU; the ViDoRe-v3 evaluations use a 4$\times$A100 GPU instance. ColBERTv2 documents are represented using 512 vectors. These vectors include the padding token vectors and punctuation token vectors. Padding token vectors are zeroed out as is the convention for ColBERTv2. So some of these 512 vectors could be 0 if the number of tokens in the document is less than 512. If the number of tokens in the document are above 512, then the document is truncated appropriately so that the total number of vectors is 512. Consistently, query expansion is used in the models.

\paragraph{Baselines: } In the multi-vector setting, our experiments compare EigenLI against two corresponding token-vector compression baselines: pooling using $k$-means++ and Ward. We emphasize that we intend to compare only with lightweight, training free compression strategies. We report results at per-document vector counts of $8,16,32$ and include uncompressed MaxSim as a reference rather than a strict upper bound, since EigenLI uses a different scoring function. In the single-vector setting, we compare EigenLI-SV to MUVERA FDE on the BEIR datasets for all three text models. Note that the dimension of the single vector produced by our method scales as $d(d+1)/2$; for the ColQwen3 4B model, this would result in a dimension of 51,360, which is prohibitive. We therefore do not run the single-vector comparison on ColQwen3. For ColBERTv2, the resulting EigenLI-SV dimension is 8,256.

\paragraph{Performance of $k$-EigenLI: } On ColBERTv2 and AnswerAI-ColBERT-small, EigenLI-32 improves over both $k$-means++ and Ward across Recall, nDCG, and MRR. GTE-ModernColBERT behaves differently: at $k=32$, its EigenLI representation is competitive with $k$-means++ at deeper Recall cutoffs but trails both clustering baselines on nDCG and MRR. This distinction is visible in \Cref{tab:beir_geomean_32}, which reports all three metrics at all three thresholds for every model. The full ColBERTv2 Recall, nDCG, and MRR results are reported in \Cref{tab:beir_recall,tab:beir_ndcg,tab:beir_mrr}; the ViDoRe-v3 Recall and nDCG results are reported in \Cref{tab:vidore_recall,tab:vidore_ndcg}. \Cref{tab:sv_recall} gives the per-dataset EigenLI-SV/MUVERA Recall comparison. Complete per-dataset results for AnswerAI-ColBERT-small and GTE-ModernColBERT, including EigenLI-SV and all baselines, are in \Cref{tab:answerai_all,tab:gte_all} in the appendix.

\begin{table}[ht]
\centering
\caption{Geometric (GM) and arithmetic (AM) means over the 13 per-dataset performance ratios, expressed as a relative improvement, of EigenLI-32 over each baseline on BEIR, grouped by embedding model.}
\label{tab:beir_geomean_32}
\scriptsize
\resizebox{\textwidth}{!}{%
\begin{tabular}{lllrrrrrrrrr}
\toprule
Model & Baseline & Mean & R@10 & R@100 & R@1000 & nDCG@10 & nDCG@100 & nDCG@1000 & MRR@10 & MRR@100 & MRR@1000 \\
\midrule
ColBERTv2 & KMeans++ & GM & +12.9\% & +8.2\% & +4.6\% & +13.8\% & +11.4\% & +9.6\% & +12.5\% & +11.8\% & +11.7\% \\
 & KMeans++ & AM & +13.8\% & +8.8\% & +5.0\% & +15.0\% & +12.2\% & +10.2\% & +13.4\% & +12.5\% & +12.5\% \\
& Ward & GM & +6.7\% & +3.5\% & +2.0\% & +7.6\% & +5.9\% & +4.9\% & +7.3\% & +7.0\% & +7.0\% \\
& Ward & AM & +7.0\% & +3.7\% & +2.2\% & +8.0\% & +6.2\% & +5.0\% & +7.6\% & +7.3\% & +7.3\% \\
& Full MaxSim & GM & -3.0\% & -3.2\% & -2.6\% & -3.3\% & -3.9\% & -4.1\% & -2.3\% & -2.5\% & -2.4\% \\
& Full MaxSim & AM & -2.9\% & -3.1\% & -2.5\% & -3.2\% & -3.8\% & -4.0\% & -2.1\% & -2.2\% & -2.2\% \\
\midrule
AnswerAI-ColBERT-small & KMeans++ & GM & +10.2\% & +9.4\% & +6.9\% & +11.0\% & +11.0\% & +9.7\% & +8.1\% & +7.6\% & +7.6\% \\
& KMeans++ & AM & +10.9\% & +9.9\% & +7.3\% & +11.9\% & +11.7\% & +10.2\% & +8.8\% & +8.2\% & +8.2\% \\
& Ward & GM & +2.4\% & +4.5\% & +3.4\% & +2.2\% & +3.5\% & +2.9\% & +0.6\% & +0.5\% & +0.5\% \\
& Ward & AM & +2.7\% & +4.8\% & +3.6\% & +2.6\% & +4.0\% & +3.2\% & +0.9\% & +0.8\% & +0.8\% \\
& Full MaxSim & GM & -12.2\% & -10.3\% & -6.8\% & -13.6\% & -13.2\% & -11.9\% & -12.0\% & -11.8\% & -11.8\% \\
& Full MaxSim & AM & -11.9\% & -10.0\% & -6.6\% & -13.3\% & -13.0\% & -11.7\% & -11.6\% & -11.5\% & -11.4\% \\
\midrule
GTE-ModernColBERT & KMeans++ & GM & -2.5\% & +0.7\% & +1.6\% & -2.8\% & -1.0\% & -0.5\% & -2.6\% & -2.6\% & -2.6\% \\
& KMeans++ & AM & -1.6\% & +1.0\% & +2.1\% & -1.9\% & -0.4\% & +0.1\% & -2.0\% & -2.0\% & -2.0\% \\
& Ward & GM & -8.9\% & -3.9\% & -0.8\% & -10.5\% & -7.9\% & -6.5\% & -9.0\% & -8.8\% & -8.8\% \\
& Ward & AM & -8.6\% & -3.7\% & -0.7\% & -10.2\% & -7.7\% & -6.3\% & -8.7\% & -8.6\% & -8.5\% \\
& Full MaxSim & GM & -12.5\% & -7.9\% & -4.8\% & -15.4\% & -13.4\% & -12.1\% & -14.3\% & -14.0\% & -14.0\% \\
& Full MaxSim & AM & -11.9\% & -7.6\% & -4.7\% & -14.8\% & -13.0\% & -11.9\% & -13.8\% & -13.6\% & -13.5\% \\
\bottomrule
\end{tabular}
}
\end{table}

\begin{table}[ht]
\centering
\caption{Geometric (GM) and arithmetic (AM) means over the 8 per-dataset performance ratios, expressed as a relative improvement, of EigenLI-32 on ViDoRe-v3.}
\label{tab:vidore_geomean_32}
\begin{tabular}{lrrrr}
\toprule
Method & R@10 & R@100 & nDCG@10 & nDCG@100 \\
\hline
GM vs KMeans++-32 & +4.8\% & +2.0\% & +5.4\% & +4.3\% \\
AM vs KMeans++-32 & +4.8\% & +2.0\% & +5.5\% & +4.4\% \\
\hline
GM vs Ward-32 & +3.1\% & +1.2\% & +3.4\% & +2.6\% \\
AM vs Ward-32 & +3.1\% & +1.2\% & +3.4\% & +2.6\% \\
\hline
GM vs Full MaxSim & -4.3\% & -1.7\% & -5.5\% & -4.7\% \\
AM vs Full MaxSim & -4.2\% & -1.7\% & -5.5\% & -4.7\% \\
\hline
\end{tabular}
\end{table}
\begin{table}[ht]
\centering
\caption{Geometric (GM) and arithmetic (AM) means over the 13
per-dataset relative improvements of EigenLI-SV-32 over MUVERA on
BEIR. For AnswerAI-ColBERT-small and GTE-ModernColBERT, we use centered MUVERA. The ColBERTv2 comparison
uses the original uncentered MUVERA representation.}
\label{tab:sv_geomean}
\small
\resizebox{\textwidth}{!}{%
\begin{tabular}{llrrrrrrrrr}
\toprule
Model & Mean & R@10 & R@100 & R@1000
      & nDCG@10 & nDCG@100 & nDCG@1000
      & MRR@10 & MRR@100 & MRR@1000 \\
\midrule
ColBERTv2
  & GM & +60.3\% & +43.9\% & +24.5\%
       & +69.6\% & +62.2\% & +51.4\%
       & +60.4\% & +57.0\% & +56.8\% \\
ColBERTv2
  & AM & +69.2\% & +62.7\% & +35.5\%
       & +78.9\% & +76.2\% & +61.3\%
       & +65.5\% & +61.4\% & +61.2\% \\
\midrule
AnswerAI-ColBERT-small
  & GM & +109.1\% & +71.7\% & +47.8\%
       & +125.6\% & +104.6\% & +89.3\%
       & +115.4\% & +106.0\% & +105.3\% \\
AnswerAI-ColBERT-small
  & AM & +153.7\% & +96.2\% & +67.8\%
       & +178.1\% & +138.6\% & +113.8\%
       & +161.1\% & +142.4\% & +141.0\% \\
\midrule
GTE-ModernColBERT
  & GM & +95.6\% & +72.5\% & +48.3\%
       & +104.6\% & +95.6\% & +84.6\%
       & +91.0\% & +87.3\% & +87.0\% \\
GTE-ModernColBERT
  & AM & +142.9\% & +97.5\% & +66.9\%
       & +156.0\% & +134.5\% & +118.1\%
       & +132.2\% & +126.0\% & +125.2\% \\
\bottomrule
\end{tabular}%
}
\end{table}
\paragraph{Comparison with MUVERA: } The performance difference is larger when compared to MUVERA. The numbers we obtain for MUVERA on MS MARCO are lower than those reported in the MUVERA paper, despite using its original source code. For MUVERA, we use 20 repetitions, a projection dimension of 16, and 32 partitions, giving a final FDE dimension of 10240. EigenLI-SV outperforms MUVERA across the evaluated BEIR datasets. The arithmetic-mean improvement in nDCG@10 is $78.9\%$ for ColBERTv2, $178.1\%$ for AnswerAI-ColBERT-small, and $156.0\%$ for GTE-ModernColBERT. For anisotropic models, these gains are noted against centered MUVERA (described in the following paragraphs). Without centering, MUVERA degrades by an additional order of magnitude; importantly, EigenLI-SV substantially outperforms even centered MUVERA. We refer the reader to \Cref{tab:sv_geomean} for the complete summary and discuss the interpretation of relative improvements alongside the absolute effectiveness values in \Cref{sec:discussion}.

\paragraph{Performance on anisotropic models and centering: }
Recent investigations have shown that MUVERA can degrade on modern models such as GTE-ModernColBERT~\cite{lateonregularization}, primarily due to the anisotropy of such models, wherein the token vectors are concentrated in a narrow cone. MUVERA depends on a SimHash projection whose approximation quality relies on random hyperplanes separating token vectors. This becomes less effective for highly anisotropic embeddings and may require a larger FDE dimension to retain quality. EigenLI-SV is less affected in our experiments, although the multi-vector EigenLI variant also exhibits model-dependent behavior.

We evaluate whether centering without renormalization mitigates the effect of anisotropy for AnswerAI-ColBERT-small and GTE-ModernColBERT. For the centered MUVERA baseline, we center each query and document
independently. Given the token vectors $\{x_i\}_{i=1}^{n}$ of a document or query, we compute their mean
$\bar{x}$ and replace each token vector with $x_i-\bar{x}$. Padding tokens are excluded when computing the
mean, and the centered vectors are not subsequently renormalized. We observe that centering substantially improves MUVERA for every model-level macro average across Recall, nDCG, and MRR at all three cutoffs. In contrast, EigenLI-32 is already robust without centering: across the same nine metric--cutoff combinations, its average barely changes on AnswerAI-ColBERT-small and decreases mildly on GTE-ModernColBERT. Centered MUVERA also remains below uncentered EigenLI-32 in nDCG@10. \Cref{tab:centering_macro} reports all average trends. In the Appendix, \Cref{tab:anisotropic_centering_at10} contains a detailed analysis of the effect of centering on anisotropic models.

\begin{table}[t]
\centering
\caption{Mean trends in MUVERA and EigenLI-SV (32) over 13 BEIR datasets before (U) and after (C) centering, with $\Delta=C-U$.}
\label{tab:centering_macro}
\scriptsize
\setlength{\tabcolsep}{3pt}
\resizebox{\textwidth}{!}{%
\begin{tabular}{lllrrr|rrr|rrr}
\toprule
Model & Metric & Method & \multicolumn{3}{c|}{@10} & \multicolumn{3}{c|}{@100} & \multicolumn{3}{c}{@1000} \\
\cmidrule(lr){4-6}\cmidrule(lr){7-9}\cmidrule(lr){10-12}
 & & & U & C & $\Delta$ & U & C & $\Delta$ & U & C & $\Delta$ \\
\midrule
\multirow{6}{*}{AnswerAI-ColBERT-small}
 & \multirow{2}{*}{Recall} & EigenLI-32 & 0.5725 & 0.5761 & +0.0036 & 0.7205 & 0.7278 & +0.0073 & 0.8376 & 0.8444 & +0.0068 \\
 & & MUVERA & 0.0302 & 0.3576 & +0.3274 & 0.1013 & 0.5053 & +0.4040 & 0.2571 & 0.6494 & +0.3923 \\
 & \multirow{2}{*}{nDCG} & EigenLI-32 & 0.4875 & 0.4871 & -0.0004 & 0.5197 & 0.5217 & +0.0020 & 0.5423 & 0.5438 & +0.0015 \\
 & & MUVERA & 0.0174 & 0.2866 & +0.2692 & 0.0339 & 0.3148 & +0.2809 & 0.0591 & 0.3382 & +0.2791 \\
 & \multirow{2}{*}{MRR} & EigenLI-32 & 0.5579 & 0.5572 & -0.0007 & 0.5643 & 0.5633 & -0.0010 & 0.5646 & 0.5636 & -0.0010 \\
 & & MUVERA & 0.0184 & 0.3272 & +0.3088 & 0.0231 & 0.3366 & +0.3135 & 0.0241 & 0.3371 & +0.3130 \\
\midrule
\multirow{6}{*}{GTE-ModernColBERT}
 & \multirow{2}{*}{Recall} & EigenLI-32 & 0.5898 & 0.5700 & -0.0198 & 0.7505 & 0.7292 & -0.0213 & 0.8615 & 0.8425 & -0.0190 \\
 & & MUVERA & 0.0247 & 0.3678 & +0.3431 & 0.0802 & 0.5009 & +0.4207 & 0.2384 & 0.6427 & +0.4043 \\
 & \multirow{2}{*}{nDCG} & EigenLI-32 & 0.4911 & 0.4753 & -0.0158 & 0.5317 & 0.5114 & -0.0203 & 0.5544 & 0.5326 & -0.0218 \\
 & & MUVERA & 0.0168 & 0.2929 & +0.2761 & 0.0281 & 0.3177 & +0.2896 & 0.0521 & 0.3417 & +0.2896 \\
 & \multirow{2}{*}{MRR} & EigenLI-32 & 0.5545 & 0.5407 & -0.0138 & 0.5611 & 0.5476 & -0.0135 & 0.5614 & 0.5479 & -0.0135 \\
 & & MUVERA & 0.0288 & 0.3416 & +0.3128 & 0.0326 & 0.3495 & +0.3169 & 0.0335 & 0.3500 & +0.3165 \\
\bottomrule
\end{tabular}
}
\end{table}

\paragraph{Compatibility with vector quantization: }
EigenLI-SV produces ordinary single vectors and can therefore use standard vector quantizers. We quantize EigenLI-SV-32 derived from ColBERT-v2 with FAISS \texttt{IndexPQ} at 1, 2, 4, and 8 bits per dimension. \Cref{tab:sv_quant_recall,tab:sv_quant_ndcg} report the arithmetic mean of the per-dataset relative Recall and nDCG losses against the corresponding full-MaxSim results for ColBERT v2, using the same evaluation queries. Results are noted for ArguAna, FiQA, SCIDOCS, and SciFact, but the same qualitative trends also hold in our completed ColBERTv2 runs on larger BEIR datasets.

\begin{table}[t]
\centering
\small
\setlength{\tabcolsep}{5pt}
\caption{Mean relative Recall loss (\%) against full MaxSim over the four noted datasets. Each entry reports @10/@100/@1000.}
\label{tab:sv_quant_recall}
\begin{tabular}{llrrrr}
\toprule
Model & Quantizer & 1 bit & 2 bits & 4 bits & 8 bits \\
\midrule
ColBERTv2 & PQ & -8.9/-7.8/-4.7 & -5.5/-6.1/-3.8 & -5.5/-5.6/-3.5 & -5.2/-5.7/-3.4 \\
\bottomrule
\end{tabular}
\end{table}

\begin{table}[t]
\centering
\small
\setlength{\tabcolsep}{5pt}
\caption{Mean relative nDCG loss (\%) against full MaxSim over the four noted datasets. Each entry reports @10/@100/@1000.}
\label{tab:sv_quant_ndcg}
\begin{tabular}{llrrrr}
\toprule
Model & Quantizer & 1 bit & 2 bits & 4 bits & 8 bits \\
\midrule
ColBERTv2 & PQ & -8.9/-8.4/-7.5 & -4.4/-4.7/-4.2 & -3.9/-4.1/-3.7 & -3.6/-3.9/-3.5 \\
\bottomrule
\end{tabular}
\end{table}

\paragraph{Compression cost: }
EigenLI also reduces offline compression cost relative to the clustering baselines. Across the 39 model--dataset pairs (three models and 13 BEIR datasets), $k$-means++ takes $6.81\times$ longer than EigenLI under the arithmetic mean of the per-pair time ratios ($6.47\times$ geometric mean), while Ward takes $17.46\times$ longer ($16.89\times$ geometric mean). \Cref{tab:compression_overhead} separates these ratios by model.

\begin{table}[ht]
\centering
\caption{Compression-time overhead relative to EigenLI. Each model row averages per-dataset time ratios over 13 BEIR datasets; the final row averages over all 39 model--dataset pairs. Ratios above one mean the clustering baseline is slower.}
\label{tab:compression_overhead}
\small
\begin{tabular}{lrrrr}
\toprule
Model & KMeans++ AM & KMeans++ GM & Ward AM & Ward GM \\
\midrule
ColBERTv2 & 6.34$\times$ & 5.97$\times$ & 16.91$\times$ & 16.41$\times$ \\
AnswerAI-ColBERT-small & 7.90$\times$ & 7.65$\times$ & 18.62$\times$ & 18.08$\times$ \\
GTE-ModernColBERT & 6.20$\times$ & 5.93$\times$ & 16.85$\times$ & 16.24$\times$ \\
\midrule
All model--dataset pairs & 6.81$\times$ & 6.47$\times$ & 17.46$\times$ & 16.89$\times$ \\
\bottomrule
\end{tabular}
\end{table}

\begin{table}[ht]
\centering
\caption{ColBERTv2 Recall across BEIR datasets.}
\label{tab:beir_recall}
\resizebox{\textwidth}{!}{%
\small
\begin{tabular}{l|l|ccc|c|ccc|c|ccc|c}
\toprule
\multirow{2}{*}{Dataset} & \multirow{2}{*}{Method} 
& \multicolumn{4}{c|}{Recall@10} 
& \multicolumn{4}{c|}{Recall@100} 
& \multicolumn{4}{c}{Recall@1000} \\
\cline{3-14}
 & 
& 8 & 16 & 32 & Full
& 8 & 16 & 32 & Full
& 8 & 16 & 32 & Full \\
\midrule

\multirow{3}{*}{ArguAna}
  & EigenLI & \textbf{0.6333} & \textbf{0.6500} & 0.6600 & \multirow{3}{*}{0.6667} & \textbf{0.9100} & 0.8967 & 0.9033 & \multirow{3}{*}{0.9167} & \textbf{0.9800} & 0.9800 & 0.9767 & \multirow{3}{*}{0.9833} \\
  & KMeans++ & 0.5833 & 0.6033 & 0.6567 & & 0.8533 & 0.8900 & 0.8967 & & 0.9733 & 0.9800 & \textbf{0.9867} & \\
  & Ward & 0.5800 & 0.6300 & \textbf{0.6633} & & 0.8733 & \textbf{0.9000} & \textbf{0.9200} & & 0.9767 & \textbf{0.9833} & \textbf{0.9867} & \\
\hline

\multirow{3}{*}{Climate-FEVER}
  & EigenLI & \textbf{0.2406} & \textbf{0.2499} & \textbf{0.2417} & \multirow{3}{*}{0.2782} & \textbf{0.4178} & \textbf{0.4465} & \textbf{0.4487} & \multirow{3}{*}{0.5077} & \textbf{0.6511} & \textbf{0.6639} & \textbf{0.6577} & \multirow{3}{*}{0.7123} \\
  & KMeans++ & 0.1965 & 0.2188 & 0.1981 & & 0.3611 & 0.3584 & 0.3676 & & 0.5670 & 0.5857 & 0.5990 & \\
  & Ward & 0.2094 & 0.2178 & 0.2188 & & 0.3657 & 0.3834 & 0.4111 & & 0.5756 & 0.5997 & 0.6127 & \\
\hline

\multirow{3}{*}{DBPedia-Entity}
  & EigenLI & \textbf{0.4688} & \textbf{0.5275} & \textbf{0.5364} & \multirow{3}{*}{0.5471} & \textbf{0.5309} & \textbf{0.6084} & \textbf{0.6236} & \multirow{3}{*}{0.6364} & \textbf{0.7445} & \textbf{0.7954} & 0.7912 & \multirow{3}{*}{0.8244} \\
  & KMeans++ & 0.4105 & 0.4760 & 0.5095 & & 0.4673 & 0.5470 & 0.6012 & & 0.6928 & 0.7650 & 0.7967 & \\
  & Ward & 0.4460 & 0.4851 & 0.5189 & & 0.5055 & 0.5674 & 0.6158 & & 0.7244 & 0.7729 & \textbf{0.8118} & \\
\hline

\multirow{3}{*}{FEVER}
  & EigenLI & \textbf{0.8099} & \textbf{0.8595} & \textbf{0.8764} & \multirow{3}{*}{0.9182} & \textbf{0.9158} & \textbf{0.9367} & \textbf{0.9445} & \multirow{3}{*}{0.9552} & \textbf{0.9483} & \textbf{0.9612} & \textbf{0.9667} & \multirow{3}{*}{0.9719} \\
  & KMeans++ & 0.6117 & 0.6884 & 0.7658 & & 0.8070 & 0.8635 & 0.9079 & & 0.9053 & 0.9310 & 0.9511 & \\
  & Ward & 0.7319 & 0.7894 & 0.8257 & & 0.8896 & 0.9115 & 0.9260 & & 0.9397 & 0.9498 & 0.9562 & \\
\hline

\multirow{3}{*}{FiQA}
 & EigenLI & \textbf{0.3230} & \textbf{0.3647} & \textbf{0.4125} & \multirow{3}{*}{0.4392} & \textbf{0.5471} & \textbf{0.6215} & \textbf{0.6363} & \multirow{3}{*}{0.6589} & \textbf{0.7937} & \textbf{0.8200} & \textbf{0.8229} & \multirow{3}{*}{0.8515} \\
 & KMeans++ & 0.2298 & 0.2848 & 0.3302 & & 0.4512 & 0.5006 & 0.5806 & & 0.7115 & 0.7619 & 0.8047 & \\
 & Ward & 0.2664 & 0.3175 & 0.3530 & & 0.4782 & 0.5453 & 0.6050 & & 0.7243 & 0.7744 & 0.8174 & \\
\hline

\multirow{3}{*}{HotpotQA}
  & EigenLI & \textbf{0.5777} & \textbf{0.6382} & \textbf{0.6525} & \multirow{3}{*}{0.6806} & \textbf{0.6977} & \textbf{0.7570} & \textbf{0.7732} & \multirow{3}{*}{0.7922} & \textbf{0.7988} & \textbf{0.8480} & \textbf{0.8589} & \multirow{3}{*}{0.8698} \\
  & KMeans++ & 0.4763 & 0.5529 & 0.6251 & & 0.6188 & 0.6879 & 0.7458 & & 0.7390 & 0.7940 & 0.8345 & \\
  & Ward & 0.5375 & 0.5988 & 0.6440 & & 0.6594 & 0.7162 & 0.7584 & & 0.7628 & 0.8102 & 0.8441 & \\
\hline

\multirow{3}{*}{MS MARCO}
  & EigenLI & \textbf{0.5907} & \textbf{0.6589} & 0.6685 & \multirow{3}{*}{0.6884} & \textbf{0.8562} & \textbf{0.8976} & 0.9021 & \multirow{3}{*}{0.9159} & \textbf{0.9624} & \textbf{0.9796} & 0.9802 & \multirow{3}{*}{0.9844} \\
  & KMeans++ & 0.4905 & 0.5916 & 0.6585 & & 0.7757 & 0.8551 & 0.9018 & & 0.9332 & 0.9648 & 0.9799 & \\
  & Ward & 0.5660 & 0.6434 & \textbf{0.6754} & & 0.8397 & 0.8893 & \textbf{0.9119} & & 0.9534 & 0.9758 & \textbf{0.9830} & \\
\hline

\multirow{3}{*}{NQ}
  & EigenLI & \textbf{0.6383} & \textbf{0.7006} & \textbf{0.7644} & \multirow{3}{*}{0.7872} & \textbf{0.8911} & \textbf{0.9239} & 0.9300 & \multirow{3}{*}{0.9528} & \textbf{0.9689} & \textbf{0.9889} & \textbf{0.9922} & \multirow{3}{*}{0.9939} \\
  & KMeans++ & 0.5128 & 0.5872 & 0.7033 & & 0.7867 & 0.8600 & 0.9056 & & 0.9094 & 0.9561 & 0.9828 & \\
  & Ward & 0.5694 & 0.6728 & 0.7250 & & 0.8072 & 0.8933 & \textbf{0.9339} & & 0.9317 & 0.9661 & 0.9794 & \\
\hline

\multirow{3}{*}{Quora}
  & EigenLI & \textbf{0.9276} & \textbf{0.9412} & 0.9206 & \multirow{3}{*}{0.9346} & \textbf{0.9872} & \textbf{0.9901} & 0.9806 & \multirow{3}{*}{0.9889} & \textbf{0.9986} & \textbf{0.9991} & 0.9965 & \multirow{3}{*}{0.9988} \\
  & KMeans++ & 0.9143 & 0.9347 & \textbf{0.9348} & & 0.9806 & 0.9882 & \textbf{0.9889} & & 0.9969 & 0.9987 & \textbf{0.9988} & \\
  & Ward & 0.9250 & 0.9348 & 0.9347 & & 0.9850 & 0.9886 & 0.9889 & & 0.9983 & 0.9987 & \textbf{0.9988} & \\
\hline

\multirow{3}{*}{SCIDOCS}
 & EigenLI & \textbf{0.1415} & \textbf{0.1370} & \textbf{0.1450} & \multirow{3}{*}{0.1535} & \textbf{0.3115} & \textbf{0.3215} & 0.3180 & \multirow{3}{*}{0.3345} & \textbf{0.5613} & \textbf{0.5703} & 0.5505 & \multirow{3}{*}{0.5653} \\
 & KMeans++ & 0.1035 & 0.1290 & 0.1315 & & 0.2895 & 0.3015 & 0.3115 & & 0.5323 & 0.5628 & 0.5713 & \\
 & Ward & 0.1230 & 0.1270 & 0.1410 & & 0.2995 & 0.3155 & \textbf{0.3295} & & 0.5588 & 0.5688 & \textbf{0.5793} & \\
\hline

\multirow{3}{*}{SciFact}
  & EigenLI & \textbf{0.7257} & \textbf{0.7684} & \textbf{0.7951} & \multirow{3}{*}{0.8046} & \textbf{0.8599} & \textbf{0.8860} & \textbf{0.8943} & \multirow{3}{*}{0.9287} & 0.9393 & 0.9527 & \textbf{0.9727} & \multirow{3}{*}{0.9733} \\
  & KMeans++ & 0.6782 & 0.7401 & 0.7583 & & 0.8476 & 0.8682 & 0.8799 & & 0.9317 & 0.9433 & 0.9567 & \\
  & Ward & 0.6901 & 0.7418 & 0.7659 & & 0.8342 & 0.8816 & 0.8910 & & \textbf{0.9567} & \textbf{0.9567} & 0.9533 & \\
\hline

\multirow{3}{*}{TREC-COVID}
  & EigenLI & \textbf{0.5580} & \textbf{0.7180} & \textbf{0.7960} & \multirow{3}{*}{0.7980} & \textbf{0.3224} & \textbf{0.4644} & \textbf{0.5300} & \multirow{3}{*}{0.5664} & \textbf{0.2211} & \textbf{0.3342} & \textbf{0.3943} & \multirow{3}{*}{0.4452} \\
  & KMeans++ & 0.3300 & 0.4240 & 0.5920 & & 0.1638 & 0.2438 & 0.3756 & & 0.1365 & 0.1837 & 0.2864 & \\
  & Ward & 0.4080 & 0.5740 & 0.6680 & & 0.2428 & 0.3418 & 0.4518 & & 0.1772 & 0.2583 & 0.3438 & \\
\hline

\multirow{3}{*}{Webis-Touche2020}
  & EigenLI & \textbf{0.2507} & \textbf{0.2417} & \textbf{0.2784} & \multirow{3}{*}{0.2603} & \textbf{0.4463} & \textbf{0.4772} & \textbf{0.4971} & \multirow{3}{*}{0.4922} & \textbf{0.7887} & \textbf{0.8223} & \textbf{0.8243} & \multirow{3}{*}{0.8275} \\
  & KMeans++ & 0.1350 & 0.1601 & 0.1867 & & 0.2816 & 0.3499 & 0.4011 & & 0.6212 & 0.6835 & 0.7239 & \\
  & Ward & 0.1771 & 0.1835 & 0.2245 & & 0.3535 & 0.3934 & 0.4175 & & 0.6748 & 0.7297 & 0.7618 & \\
\bottomrule
\end{tabular}
}
\end{table}

\begin{table}[ht]
\centering
\caption{Recall across ViDoRe-v3 datasets.}
\label{tab:vidore_recall}
\resizebox{\textwidth}{!}{%
\small
\setlength{\tabcolsep}{12pt} 
\begin{tabular}{l|l|ccc|c|ccc|c}
\toprule
\multirow{2}{*}{Dataset} & \multirow{2}{*}{Method} 
& \multicolumn{4}{c|}{Recall@10} 
& \multicolumn{4}{c}{Recall@100} \\
\cline{3-10}
 & 
& 8 & 16 & 32 & Full
& 8 & 16 & 32 & Full \\
\midrule
\multirow{3}{*}{HR}
  & EigenLI   & \textbf{0.5547} & \textbf{0.5871} & \textbf{0.6127} & \multirow{3}{*}{0.649} & \textbf{0.8641} & \textbf{0.8929} & \textbf{0.9083} & \multirow{3}{*}{0.92} \\
  & KMeans++  & 0.5134 & 0.5375 & 0.5831 & & 0.8568 & 0.8681 & 0.8905 & \\
  & Ward      & 0.5204 & 0.5536 & 0.5905 & & 0.8452 & 0.8721 & 0.8962 & \\
\hline
\multirow{3}{*}{Fin-EN}
  & EigenLI   & \textbf{0.5097} & \textbf{0.5810} & \textbf{0.6368} & \multirow{3}{*}{0.680} & \textbf{0.8187} & \textbf{0.8620} & \textbf{0.8980} & \multirow{3}{*}{0.93} \\
  & KMeans++  & 0.4667 & 0.5267 & 0.5942 & & 0.7883 & 0.8256 & 0.8726 & \\
  & Ward      & 0.4776 & 0.5470 & 0.6172 & & 0.7862 & 0.8286 & 0.8874 & \\
\hline
\multirow{3}{*}{Fin-FR}
  & EigenLI   & \textbf{0.3680} & \textbf{0.4518} & \textbf{0.5057} & \multirow{3}{*}{0.553} & \textbf{0.7633} & \textbf{0.8137} & \textbf{0.8490} & \multirow{3}{*}{0.87} \\
  & KMeans++  & 0.3326 & 0.4082 & 0.4616 & & 0.7140 & 0.7734 & 0.8053 & \\
  & Ward      & 0.3354 & 0.3991 & 0.4608 & & 0.7007 & 0.7736 & 0.8222 & \\
\hline
\multirow{3}{*}{Ind.}
  & EigenLI   & \textbf{0.4582} & \textbf{0.4958} & \textbf{0.5238} & \multirow{3}{*}{0.561} & \textbf{0.7519} & \textbf{0.7845} & \textbf{0.8031} & \multirow{3}{*}{0.83} \\
  & KMeans++  & 0.4340 & 0.4800 & 0.5003 & & 0.7328 & 0.7578 & 0.7763 & \\
  & Ward      & 0.4329 & 0.4760 & 0.5185 & & 0.7425 & 0.7616 & 0.7870 & \\
\hline
\multirow{3}{*}{Phar.}
  & EigenLI   & \textbf{0.6222} & \textbf{0.6522} & \textbf{0.6745} & \multirow{3}{*}{0.689} & 0.8814 & \textbf{0.9041} & \textbf{0.9142} & \multirow{3}{*}{0.93} \\
  & KMeans++  & 0.6020 & 0.6422 & 0.6543 & & \textbf{0.8746} & 0.8963 & 0.9095 & \\
  & Ward      & 0.5968 & 0.6373 & 0.6672 & & 0.8728 & 0.8933 & 0.9130 & \\
\hline
\multirow{3}{*}{CS}
  & EigenLI   & \textbf{0.7485} & \textbf{0.7764} & \textbf{0.7957} & \multirow{3}{*}{0.793} & \textbf{0.9689} & \textbf{0.9773} & \textbf{0.9823} & \multirow{3}{*}{0.99} \\
  & KMeans++  & 0.7278 & 0.7496 & 0.7598 & & 0.9625 & 0.9675 & 0.9767 & \\
  & Ward      & 0.7203 & 0.7473 & 0.7639 & & 0.9620 & 0.9727 & 0.9794 & \\
\hline
\multirow{3}{*}{Ener.}
  & EigenLI   & \textbf{0.6708} & \textbf{0.7052} & \textbf{0.7189} & \multirow{3}{*}{0.749} & \textbf{0.9284} & \textbf{0.9385} & \textbf{0.9466} & \multirow{3}{*}{0.95} \\
  & KMeans++  & 0.6205 & 0.6631 & 0.6904 & & 0.9039 & 0.9222 & 0.9460 & \\
  & Ward      & 0.6293 & 0.6814 & 0.7049 & & 0.9125 & 0.9279 & 0.9405 & \\
\hline
\multirow{3}{*}{Phys.}
  & EigenLI   & \textbf{0.4955} & \textbf{0.5041} & 0.5129 & \multirow{3}{*}{0.517} & \textbf{0.8824} & \textbf{0.8934} & \textbf{0.8975} & \multirow{3}{*}{0.90} \\
  & KMeans++  & 0.4770 & 0.4940 & 0.5114 & & 0.8693 & 0.8789 & 0.8874 & \\
  & Ward      & 0.4920 & 0.4962 & \textbf{0.5133} & & 0.8669 & 0.8776 & 0.8900 & \\
\bottomrule
\end{tabular}
}
\end{table}

\begin{table}[ht]
\centering
\caption{MUVERA (10240) vs EigenLI-SV (8256) Recall on all 13 BEIR datasets.}
\label{tab:sv_recall}
\small
\begin{tabular}{l|cc|cc|cc}
\toprule
\multirow{2}{*}{Dataset} & \multicolumn{2}{c|}{Recall@10} & \multicolumn{2}{c|}{Recall@100} & \multicolumn{2}{c}{Recall@1000} \\
\cline{2-7}
 & MUVERA & EigenLI-SV & MUVERA & EigenLI-SV & MUVERA & EigenLI-SV \\
\midrule
ArguAna & 0.5000 & \textbf{0.6600} & 0.8033 & \textbf{0.9033} & 0.9533 & \textbf{0.9767} \\
FiQA & 0.2399 & \textbf{0.4125} & 0.4359 & \textbf{0.6363} & 0.7033 & \textbf{0.8229} \\
SCIDOCS & 0.0935 & \textbf{0.1450} & 0.2775 & \textbf{0.3180} & 0.5485 & \textbf{0.5505} \\
SciFact & 0.6427 & \textbf{0.7951} & 0.8309 & \textbf{0.8943} & 0.9250 & \textbf{0.9727} \\
TREC-COVID & 0.2280 & \textbf{0.7960} & 0.0996 & \textbf{0.5300} & 0.0999 & \textbf{0.3943} \\
Webis-Touche2020 & 0.1042 & \textbf{0.2784} & 0.2338 & \textbf{0.4971} & 0.5508 & \textbf{0.8243} \\
Quora & 0.8603 & \textbf{0.9206} & 0.9552 & \textbf{0.9806} & 0.9905 & \textbf{0.9965} \\
NQ & 0.4600 & \textbf{0.7644} & 0.7994 & \textbf{0.9300} & 0.9306 & \textbf{0.9922} \\
HotpotQA & 0.4271 & \textbf{0.6525} & 0.5672 & \textbf{0.7732} & 0.6989 & \textbf{0.8589} \\
DBPedia-Entity & 0.3630 & \textbf{0.5364} & 0.4261 & \textbf{0.6236} & 0.6518 & \textbf{0.7912} \\
Climate-FEVER & 0.1554 & \textbf{0.2417} & 0.2803 & \textbf{0.4487} & 0.5042 & \textbf{0.6577} \\
FEVER & 0.5981 & \textbf{0.8764} & 0.8070 & \textbf{0.9445} & 0.9106 & \textbf{0.9667} \\
MS MARCO & 0.5377 & \textbf{0.6685} & 0.8141 & \textbf{0.9021} & 0.9373 & \textbf{0.9802} \\
\bottomrule
\end{tabular}
\end{table}
\begin{table}[ht]
\centering
\caption{MUVERA (10240) vs EigenLI-SV (8256) nDCG on all 13 BEIR datasets.}
\label{tab:sv_ndcg}
\small
\begin{tabular}{l|cc|cc|cc}
\toprule
\multirow{2}{*}{Dataset} & \multicolumn{2}{c|}{nDCG@10} & \multicolumn{2}{c|}{nDCG@100} & \multicolumn{2}{c}{nDCG@1000} \\
\cline{2-7}
 & MUVERA & EigenLI-SV & MUVERA & EigenLI-SV & MUVERA & EigenLI-SV \\
\midrule
ArguAna & 0.2833 & \textbf{0.4378} & 0.3489 & \textbf{0.4880} & 0.3676 & \textbf{0.4976} \\
FiQA & 0.1940 & \textbf{0.3425} & 0.2438 & \textbf{0.4019} & 0.2864 & \textbf{0.4318} \\
SCIDOCS & 0.0880 & \textbf{0.1493} & 0.1519 & \textbf{0.2080} & 0.2070 & \textbf{0.2558} \\
SciFact & 0.5156 & \textbf{0.6841} & 0.5569 & \textbf{0.7062} & 0.5689 & \textbf{0.7157} \\
TREC-COVID & 0.2254 & \textbf{0.7417} & 0.1120 & \textbf{0.5176} & 0.1063 & \textbf{0.4264} \\
Webis-Touche2020 & 0.0886 & \textbf{0.2866} & 0.1509 & \textbf{0.4011} & 0.2531 & \textbf{0.5090} \\
Quora & 0.7632 & \textbf{0.8360} & 0.7878 & \textbf{0.8528} & 0.7933 & \textbf{0.8557} \\
NQ & 0.3160 & \textbf{0.5466} & 0.3922 & \textbf{0.5832} & 0.4097 & \textbf{0.5919} \\
HotpotQA & 0.4104 & \textbf{0.6442} & 0.4455 & \textbf{0.6751} & 0.4652 & \textbf{0.6881} \\
DBPedia-Entity & 0.3884 & \textbf{0.5439} & 0.4014 & \textbf{0.5776} & 0.4709 & \textbf{0.6342} \\
Climate-FEVER & 0.1225 & \textbf{0.1970} & 0.1568 & \textbf{0.2519} & 0.1932 & \textbf{0.2884} \\
FEVER & 0.4324 & \textbf{0.7479} & 0.4772 & \textbf{0.7641} & 0.4909 & \textbf{0.7674} \\
MS MARCO & 0.3531 & \textbf{0.4464} & 0.4119 & \textbf{0.4976} & 0.4279 & \textbf{0.5079} \\
\bottomrule
\end{tabular}
\end{table}
\section{Discussion, Limitations, and Future Directions}\label{sec:discussion}

We introduce EigenLI, a new method for reducing the number of vectors used to represent each document in multi-vector retrieval. EigenLI is motivated by the empirical observation that token-level embeddings produced by multi-vector models often exhibit substantial low-rank structure. Rather than compressing a document into a smaller set of representative token vectors, EigenLI represents it using a low-dimensional spectral subspace and introduces a corresponding scoring function for measuring the interaction between query embeddings and this compressed representation.

Our experiments show that the effectiveness of this approach depends on the geometry of the underlying model. At $k = 32$, for ColBERTv2 and ColQwen3, EigenLI improves over strong training-free compression baselines based on Ward hierarchical pooling. On AnswerAI-ColBERT-small, EigenLI performs comparably to Ward, whereas on GTE-ModernColBERT it underperforms Ward. An additional benefit of the EigenLI scoring function is that it admits an exact high-dimensional single-vector dot-product representation, EigenLI-SV. At comparable representation dimensionality, EigenLI-SV substantially outperforms MUVERA in our experiments.

\paragraph{Limitations.}
Our experiments evaluate compression primarily using end-to-end brute-force retrieval. A practical large-scale EigenLI retrieval pipeline would instead use EigenLI-SV, potentially with quantization, together with an approximate nearest-neighbor (ANN) index to retrieve candidates, followed by reranking with full MaxSim. A controlled comparison of such a pipeline against optimized systems based on Ward pooling and PLAID remains an important missing experiment.

A second limitation is the dimensionality of EigenLI-SV. Its dimension scales quadratically with the token-vector dimension, which is manageable for models such as ColBERTv2 but becomes substantially more expensive for models with larger token embeddings. Combining EigenLI-SV with sketching, dimensionality reduction, or learned projections is therefore an important direction for improving its practical scalability. Finally, our current quantization experiments are limited in scope. A more systematic study is needed to understand how aggressively EigenLI/EigenLI-SV can be quantized, and how its quantization--effectiveness tradeoff compares with existing single- and multi-vector retrieval systems.

\paragraph{Future directions.}
Our results suggest several directions for future work.

\begin{enumerate}
    \item \textbf{Understanding model-dependent geometry.}
    Different multi-vector models appear to induce substantially different geometries in their token representations. What geometric properties determine whether spectral compression such as EigenLI or clustering-based compression such as Ward is preferable? Can the two approaches be combined to obtain a more robust compression method? A related question is whether EigenLI can be modified so that retrieval effectiveness is monotonic, as the number of retained directions increases.

    \item \textbf{Understanding the behavior of EigenLI on anisotropic models.}
    EigenLI underperforms Ward on GTE-ModernColBERT. Moreover, on some datasets, such as DBPedia-Entity, its performance can decrease substantially as $k$ increases from 8 to 32. This indicates that GTE-ModernColBERT exhibits a highly degenerate low rank structure. In our preliminary experiments, LateOn exhibited an even more pronounced low-rank behavior. Understanding these phenomena, and developing strategies that make spectral compression more robust to them, is an interesting direction for future work.

    \item \textbf{End-to-end ANN retrieval.}
    How does an EigenLI-SV-based ANN retrieval system, followed by full-MaxSim reranking, compare with highly optimized late-interaction retrieval systems such as PLAID and its subsequent implementations? Such a comparison should jointly consider retrieval effectiveness, latency, index size, construction cost, and memory usage.

    \item \textbf{Beyond low-rank structure.}
    Low-rank structure is only one possible structure in multi-vector representations. What other mathematical structure is present in token-level embedding sets, and can such structure be exploited to design more effective compression and indexing schemes?

    \item \textbf{Relationship to MUVERA and Chamfer approximation.}
    MUVERA has been proven to be optimal for approximating Chamfer scores \cite{j26, jayaram2026near}, yet EigenLI-SV performs substantially better empirically in our experiments. Understanding this gap is an interesting theoretical and empirical question. One possibility is that retrieval does not require uniformly accurate approximation of all query-document scores: it may suffice to preserve the relative ordering of the small set of documents relevant to each query. Another possibility is that EigenLI-SV gains from exploiting the low-rank structure present in actual multi-vector representations, whereas worst-case approximation guarantees do not assume such structure. This raises the question of whether MUVERA can be adapted to exploit low-rank structure and thereby obtain better practical efficiency and retrieval effectiveness.
\end{enumerate}

\newpage
\clearpage
\bibliography{references}
\bibliographystyle{neurips2026}

\newpage
\appendix
\section{Detailed ViDoRe-v3 Results}
\begin{table}[H]
\centering
\caption{nDCG across ViDoRe-v3 datasets.}
\label{tab:vidore_ndcg}
\resizebox{\textwidth}{!}{%
\small
\setlength{\tabcolsep}{12pt}
\begin{tabular}{l|l|ccc|c|ccc|c}
\toprule
\multirow{2}{*}{Dataset} & \multirow{2}{*}{Method}
& \multicolumn{4}{c|}{nDCG@10}
& \multicolumn{4}{c}{nDCG@100} \\
\cline{3-10}
 &
& 8 & 16 & 32 & Full
& 8 & 16 & 32 & Full \\
\midrule
\multirow{3}{*}{HR}
  & EigenLI   & \textbf{0.5009} & \textbf{0.5450} & \textbf{0.5672} & \multirow{3}{*}{0.602} & \textbf{0.5926} & \textbf{0.6360} & \textbf{0.6571} & \multirow{3}{*}{0.69} \\
  & KMeans++  & 0.4533 & 0.4901 & 0.5303 & & 0.5547 & 0.5899 & 0.6234 & \\
  & Ward      & 0.4611 & 0.5065 & 0.5440 & & 0.5586 & 0.6045 & 0.6384 & \\
\hline
\multirow{3}{*}{Fin-EN}
  & EigenLI   & \textbf{0.4605} & \textbf{0.5298} & \textbf{0.5823} & \multirow{3}{*}{0.636} & \textbf{0.5540} & \textbf{0.6164} & \textbf{0.6660} & \multirow{3}{*}{0.72} \\
  & KMeans++  & 0.4084 & 0.4755 & 0.5467 & & 0.5057 & 0.5662 & 0.6313 & \\
  & Ward      & 0.4167 & 0.4953 & 0.5737 & & 0.5117 & 0.5815 & 0.6589 & \\
\hline
\multirow{3}{*}{Fin-FR}
  & EigenLI   & \textbf{0.3013} & \textbf{0.3760} & \textbf{0.4200} & \multirow{3}{*}{0.468} & \textbf{0.4152} & \textbf{0.4842} & \textbf{0.5233} & \multirow{3}{*}{0.57} \\
  & KMeans++  & 0.2657 & 0.3316 & 0.3912 & & 0.3761 & 0.4382 & 0.4927 & \\
  & Ward      & 0.2717 & 0.3349 & 0.3895 & & 0.3784 & 0.4435 & 0.4962 & \\
\hline
\multirow{3}{*}{Ind.}
  & EigenLI   & \textbf{0.4118} & \textbf{0.4587} & \textbf{0.4896} & \multirow{3}{*}{0.537} & \textbf{0.5029} & \textbf{0.5474} & \textbf{0.5763} & \multirow{3}{*}{0.62} \\
  & KMeans++  & 0.3843 & 0.4268 & 0.4545 & & 0.4762 & 0.5153 & 0.5425 & \\
  & Ward      & 0.3747 & 0.4234 & 0.4722 & & 0.4690 & 0.5136 & 0.5579 & \\
\hline
\multirow{3}{*}{Phar.}
  & EigenLI   & \textbf{0.5678} & \textbf{0.6171} & \textbf{0.6365} & \multirow{3}{*}{0.658} & \textbf{0.6506} & \textbf{0.6994} & \textbf{0.7155} & \multirow{3}{*}{0.74} \\
  & KMeans++  & 0.5391 & 0.5841 & 0.6110 & & 0.6270 & 0.6672 & 0.6959 & \\
  & Ward      & 0.5464 & 0.5876 & 0.6294 & & 0.6344 & 0.6722 & 0.7117 & \\
\hline
\multirow{3}{*}{CS}
  & EigenLI   & \textbf{0.7156} & \textbf{0.7487} & \textbf{0.7631} & \multirow{3}{*}{0.756} & \textbf{0.7904} & \textbf{0.8177} & \textbf{0.8286} & \multirow{3}{*}{0.83} \\
  & KMeans++  & 0.6825 & 0.7076 & 0.7250 & & 0.7631 & 0.7832 & 0.7996 & \\
  & Ward      & 0.6743 & 0.7081 & 0.7197 & & 0.7565 & 0.7853 & 0.7944 & \\
\hline
\multirow{3}{*}{Ener.}
  & EigenLI   & \textbf{0.5500} & \textbf{0.6019} & \textbf{0.6221} & \multirow{3}{*}{0.667} & \textbf{0.6239} & \textbf{0.6705} & \textbf{0.6901} & \multirow{3}{*}{0.73} \\
  & KMeans++  & 0.5114 & 0.5541 & 0.5877 & & 0.5911 & 0.6291 & 0.6616 & \\
  & Ward      & 0.5120 & 0.5737 & 0.6047 & & 0.5925 & 0.6452 & 0.6736 & \\
\hline
\multirow{3}{*}{Phys.}
  & EigenLI   & \textbf{0.4758} & \textbf{0.4788} & 0.4848 & \multirow{3}{*}{0.492} & \textbf{0.6055} & \textbf{0.6110} & \textbf{0.6146} & \multirow{3}{*}{0.62} \\
  & KMeans++  & 0.4508 & 0.4721 & \textbf{0.4858} & & 0.5810 & 0.5995 & 0.6098 & \\
  & Ward      & 0.4694 & 0.4729 & 0.4840 & & 0.5929 & 0.5996 & 0.6087 & \\
\bottomrule
\end{tabular}
}
\end{table}

\clearpage
\section{Detailed ColBERTv2 BEIR Results}

\Cref{tab:beir_ndcg,tab:beir_mrr} complement the Recall results in
\Cref{tab:beir_recall}. They report the same evaluation protocol, compression
budgets, and uncompressed MaxSim reference at every cutoff.

\begin{landscape}
\begin{table}[p]
\centering
\caption{ColBERTv2 nDCG across BEIR datasets.}
\label{tab:beir_ndcg}
\resizebox{\linewidth}{!}{%
\renewcommand{\arraystretch}{0.92}
\small
\begin{tabular}{l|l|ccc|c|ccc|c|ccc|c}
\toprule
\multirow{2}{*}{Dataset} & \multirow{2}{*}{Method}
& \multicolumn{4}{c|}{nDCG@10}
& \multicolumn{4}{c|}{nDCG@100}
& \multicolumn{4}{c}{nDCG@1000} \\
\cline{3-14}
&
& 8 & 16 & 32 & Full
& 8 & 16 & 32 & Full
& 8 & 16 & 32 & Full \\
\midrule
\multirow{3}{*}{ArguAna}
  & EigenLI & \textbf{0.4092} & \textbf{0.4243} & 0.4378 & \multirow{3}{*}{0.4337} & \textbf{0.4681} & \textbf{0.4771} & 0.4880 & \multirow{3}{*}{0.4864} & \textbf{0.4767} & \textbf{0.4878} & 0.4976 & \multirow{3}{*}{0.4947} \\
  & KMeans++ & 0.3646 & 0.4082 & \textbf{0.4392} & & 0.4228 & 0.4694 & \textbf{0.4896} & & 0.4385 & 0.4804 & \textbf{0.5009} & \\
  & Ward & 0.3624 & 0.4145 & 0.4345 & & 0.4268 & 0.4721 & 0.4893 & & 0.4402 & 0.4825 & 0.4976 & \\
\hline
\multirow{3}{*}{Climate-FEVER}
  & EigenLI & \textbf{0.2140} & \textbf{0.2108} & \textbf{0.1970} & \multirow{3}{*}{0.2224} & \textbf{0.2617} & \textbf{0.2639} & \textbf{0.2519} & \multirow{3}{*}{0.2843} & \textbf{0.3019} & \textbf{0.3019} & \textbf{0.2884} & \multirow{3}{*}{0.3195} \\
  & KMeans++ & 0.1724 & 0.1850 & 0.1618 & & 0.2162 & 0.2223 & 0.2085 & & 0.2502 & 0.2603 & 0.2466 & \\
  & Ward & 0.1774 & 0.1802 & 0.1821 & & 0.2193 & 0.2249 & 0.2330 & & 0.2545 & 0.2609 & 0.2668 & \\
\hline
\multirow{3}{*}{DBPedia-Entity}
  & EigenLI & \textbf{0.4904} & \textbf{0.5280} & \textbf{0.5439} & \multirow{3}{*}{0.5553} & \textbf{0.5104} & \textbf{0.5610} & \textbf{0.5776} & \multirow{3}{*}{0.5903} & \textbf{0.5794} & \textbf{0.6220} & \textbf{0.6342} & \multirow{3}{*}{0.6502} \\
  & KMeans++ & 0.4126 & 0.4918 & 0.5235 & & 0.4312 & 0.5132 & 0.5639 & & 0.5029 & 0.5851 & 0.6266 & \\
  & Ward & 0.4661 & 0.4977 & 0.5296 & & 0.4862 & 0.5310 & 0.5659 & & 0.5537 & 0.5969 & 0.6297 & \\
\hline
\multirow{3}{*}{FEVER}
  & EigenLI & \textbf{0.6500} & \textbf{0.7196} & \textbf{0.7479} & \multirow{3}{*}{0.8371} & \textbf{0.6742} & \textbf{0.7376} & \textbf{0.7641} & \multirow{3}{*}{0.8464} & \textbf{0.6788} & \textbf{0.7411} & \textbf{0.7674} & \multirow{3}{*}{0.8490} \\
  & KMeans++ & 0.4550 & 0.5312 & 0.6206 & & 0.4974 & 0.5699 & 0.6527 & & 0.5103 & 0.5790 & 0.6587 & \\
  & Ward & 0.5795 & 0.6486 & 0.6913 & & 0.6147 & 0.6762 & 0.7143 & & 0.6214 & 0.6815 & 0.7186 & \\
\hline
\multirow{3}{*}{FiQA}
  & EigenLI & \textbf{0.2670} & \textbf{0.3142} & \textbf{0.3425} & \multirow{3}{*}{0.3627} & \textbf{0.3255} & \textbf{0.3812} & \textbf{0.4019} & \multirow{3}{*}{0.4228} & \textbf{0.3645} & \textbf{0.4131} & \textbf{0.4318} & \multirow{3}{*}{0.4533} \\
  & KMeans++ & 0.1848 & 0.2366 & 0.2840 & & 0.2418 & 0.2927 & 0.3492 & & 0.2825 & 0.3330 & 0.3841 & \\
  & Ward & 0.2202 & 0.2639 & 0.3078 & & 0.2751 & 0.3242 & 0.3746 & & 0.3141 & 0.3608 & 0.4079 & \\
\hline
\multirow{3}{*}{HotpotQA}
  & EigenLI & \textbf{0.5697} & \textbf{0.6298} & \textbf{0.6442} & \multirow{3}{*}{0.6793} & \textbf{0.6005} & \textbf{0.6601} & \textbf{0.6751} & \multirow{3}{*}{0.7079} & \textbf{0.6157} & \textbf{0.6739} & \textbf{0.6881} & \multirow{3}{*}{0.7197} \\
  & KMeans++ & 0.4558 & 0.5373 & 0.6172 & & 0.4918 & 0.5716 & 0.6480 & & 0.5100 & 0.5875 & 0.6614 & \\
  & Ward & 0.5274 & 0.5910 & 0.6376 & & 0.5583 & 0.6208 & 0.6670 & & 0.5738 & 0.6349 & 0.6799 & \\
\hline
\multirow{3}{*}{MS MARCO}
  & EigenLI & \textbf{0.3913} & \textbf{0.4403} & 0.4464 & \multirow{3}{*}{0.4656} & \textbf{0.4486} & \textbf{0.4921} & 0.4976 & \multirow{3}{*}{0.5154} & \textbf{0.4626} & \textbf{0.5029} & 0.5079 & \multirow{3}{*}{0.5244} \\
  & KMeans++ & 0.3141 & 0.3879 & 0.4406 & & 0.3747 & 0.4444 & 0.4939 & & 0.3949 & 0.4587 & 0.5042 & \\
  & Ward & 0.3720 & 0.4279 & \textbf{0.4539} & & 0.4309 & 0.4810 & \textbf{0.5054} & & 0.4456 & 0.4923 & \textbf{0.5148} & \\
\hline
\multirow{3}{*}{NQ}
  & EigenLI & \textbf{0.4348} & \textbf{0.5033} & \textbf{0.5466} & \multirow{3}{*}{0.5898} & \textbf{0.4914} & \textbf{0.5546} & \textbf{0.5832} & \multirow{3}{*}{0.6279} & \textbf{0.5017} & \textbf{0.5636} & \textbf{0.5919} & \multirow{3}{*}{0.6338} \\
  & KMeans++ & 0.3330 & 0.3863 & 0.4928 & & 0.3919 & 0.4481 & 0.5393 & & 0.4082 & 0.4608 & 0.5493 & \\
  & Ward & 0.3753 & 0.4718 & 0.5154 & & 0.4302 & 0.5222 & 0.5644 & & 0.4466 & 0.5320 & 0.5708 & \\
\hline
\multirow{3}{*}{Quora}
  & EigenLI & \textbf{0.8462} & \textbf{0.8619} & 0.8360 & \multirow{3}{*}{0.8525} & \textbf{0.8630} & \textbf{0.8761} & 0.8528 & \multirow{3}{*}{0.8680} & \textbf{0.8651} & \textbf{0.8779} & 0.8556 & \multirow{3}{*}{0.8699} \\
  & KMeans++ & 0.8264 & 0.8523 & \textbf{0.8526} & & 0.8446 & 0.8676 & \textbf{0.8681} & & 0.8474 & 0.8696 & \textbf{0.8700} & \\
  & Ward & 0.8413 & 0.8529 & 0.8526 & & 0.8580 & 0.8683 & 0.8681 & & 0.8604 & 0.8702 & 0.8700 & \\
\hline
\multirow{3}{*}{SCIDOCS}
  & EigenLI & \textbf{0.1268} & \textbf{0.1375} & \textbf{0.1493} & \multirow{3}{*}{0.1517} & \textbf{0.1842} & \textbf{0.2006} & \textbf{0.2080} & \multirow{3}{*}{0.2119} & \textbf{0.2352} & \textbf{0.2522} & \textbf{0.2558} & \multirow{3}{*}{0.2601} \\
  & KMeans++ & 0.0973 & 0.1178 & 0.1374 & & 0.1587 & 0.1747 & 0.1980 & & 0.2078 & 0.2278 & 0.2511 & \\
  & Ward & 0.1148 & 0.1317 & 0.1381 & & 0.1752 & 0.1961 & 0.2009 & & 0.2277 & 0.2477 & 0.2517 & \\
\hline
\multirow{3}{*}{SciFact}
  & EigenLI & \textbf{0.5918} & \textbf{0.6431} & \textbf{0.6841} & \multirow{3}{*}{0.6959} & \textbf{0.6223} & \textbf{0.6692} & \textbf{0.7062} & \multirow{3}{*}{0.7220} & \textbf{0.6323} & \textbf{0.6776} & \textbf{0.7157} & \multirow{3}{*}{0.7276} \\
  & KMeans++ & 0.5415 & 0.5926 & 0.6311 & & 0.5797 & 0.6223 & 0.6593 & & 0.5907 & 0.6318 & 0.6689 & \\
  & Ward & 0.5484 & 0.6071 & 0.6387 & & 0.5818 & 0.6382 & 0.6674 & & 0.5969 & 0.6470 & 0.6749 & \\
\hline
\multirow{3}{*}{TREC-COVID}
  & EigenLI & \textbf{0.5317} & \textbf{0.6633} & \textbf{0.7417} & \multirow{3}{*}{0.7369} & \textbf{0.3278} & \textbf{0.4566} & \textbf{0.5176} & \multirow{3}{*}{0.5410} & \textbf{0.2476} & \textbf{0.3661} & \textbf{0.4264} & \multirow{3}{*}{0.4713} \\
  & KMeans++ & 0.3283 & 0.4172 & 0.5560 & & 0.1777 & 0.2520 & 0.3708 & & 0.1490 & 0.2033 & 0.3113 & \\
  & Ward & 0.3931 & 0.5371 & 0.6205 & & 0.2482 & 0.3436 & 0.4415 & & 0.1944 & 0.2835 & 0.3715 & \\
\hline
\multirow{3}{*}{Webis-Touche2020}
  & EigenLI & \textbf{0.2406} & \textbf{0.2441} & \textbf{0.2866} & \multirow{3}{*}{0.2649} & \textbf{0.3440} & \textbf{0.3698} & \textbf{0.4011} & \multirow{3}{*}{0.3851} & \textbf{0.4575} & \textbf{0.4838} & \textbf{0.5090} & \multirow{3}{*}{0.4951} \\
  & KMeans++ & 0.1295 & 0.1554 & 0.1760 & & 0.1996 & 0.2485 & 0.2867 & & 0.3120 & 0.3596 & 0.3939 & \\
  & Ward & 0.1651 & 0.1788 & 0.2135 & & 0.2539 & 0.2878 & 0.3117 & & 0.3615 & 0.4002 & 0.4252 & \\
\bottomrule
\end{tabular}
}
\end{table}
\end{landscape}

\newpage
\begin{landscape}
\begin{table}[p]
\centering
\caption{ColBERTv2 MRR across BEIR datasets.}
\label{tab:beir_mrr}
\resizebox{\linewidth}{!}{%
\renewcommand{\arraystretch}{0.92}
\small
\begin{tabular}{l|l|ccc|c|ccc|c|ccc|c}
\toprule
\multirow{2}{*}{Dataset} & \multirow{2}{*}{Method}
& \multicolumn{4}{c|}{MRR@10}
& \multicolumn{4}{c|}{MRR@100}
& \multicolumn{4}{c}{MRR@1000} \\
\cline{3-14}
&
& 8 & 16 & 32 & Full
& 8 & 16 & 32 & Full
& 8 & 16 & 32 & Full \\
\midrule
\multirow{3}{*}{ArguAna}
  & EigenLI & \textbf{0.3399} & \textbf{0.3537} & 0.3680 & \multirow{3}{*}{0.3608} & \textbf{0.3518} & \textbf{0.3645} & 0.3777 & \multirow{3}{*}{0.3713} & \textbf{0.3521} & \textbf{0.3649} & 0.3781 & \multirow{3}{*}{0.3716} \\
  & KMeans++ & 0.2972 & 0.3476 & \textbf{0.3710} & & 0.3093 & 0.3602 & \textbf{0.3809} & & 0.3100 & 0.3605 & \textbf{0.3813} & \\
  & Ward & 0.2940 & 0.3471 & 0.3624 & & 0.3079 & 0.3588 & 0.3736 & & 0.3085 & 0.3592 & 0.3739 & \\
\hline
\multirow{3}{*}{Climate-FEVER}
  & EigenLI & \textbf{0.2821} & \textbf{0.2679} & \textbf{0.2428} & \multirow{3}{*}{0.2743} & \textbf{0.2900} & \textbf{0.2772} & \textbf{0.2519} & \multirow{3}{*}{0.2854} & \textbf{0.2912} & \textbf{0.2782} & \textbf{0.2529} & \multirow{3}{*}{0.2858} \\
  & KMeans++ & 0.2250 & 0.2325 & 0.1971 & & 0.2338 & 0.2396 & 0.2087 & & 0.2348 & 0.2408 & 0.2095 & \\
  & Ward & 0.2241 & 0.2222 & 0.2288 & & 0.2333 & 0.2310 & 0.2389 & & 0.2343 & 0.2319 & 0.2397 & \\
\hline
\multirow{3}{*}{DBPedia-Entity}
  & EigenLI & \textbf{0.8173} & 0.8135 & \textbf{0.8386} & \multirow{3}{*}{0.8662} & \textbf{0.8193} & 0.8135 & \textbf{0.8386} & \multirow{3}{*}{0.8662} & \textbf{0.8194} & 0.8135 & \textbf{0.8386} & \multirow{3}{*}{0.8662} \\
  & KMeans++ & 0.6989 & 0.8022 & 0.8269 & & 0.7016 & 0.8035 & 0.8269 & & 0.7016 & 0.8035 & 0.8269 & \\
  & Ward & 0.7858 & \textbf{0.8148} & 0.8259 & & 0.7867 & \textbf{0.8162} & 0.8259 & & 0.7867 & \textbf{0.8162} & 0.8259 & \\
\hline
\multirow{3}{*}{FEVER}
  & EigenLI & \textbf{0.6256} & \textbf{0.7036} & \textbf{0.7343} & \multirow{3}{*}{0.8423} & \textbf{0.6307} & \textbf{0.7070} & \textbf{0.7372} & \multirow{3}{*}{0.8435} & \textbf{0.6308} & \textbf{0.7071} & \textbf{0.7373} & \multirow{3}{*}{0.8435} \\
  & KMeans++ & 0.4271 & 0.5036 & 0.6000 & & 0.4356 & 0.5114 & 0.6066 & & 0.4360 & 0.5117 & 0.6067 & \\
  & Ward & 0.5567 & 0.6315 & 0.6763 & & 0.5640 & 0.6370 & 0.6807 & & 0.5642 & 0.6371 & 0.6808 & \\
\hline
\multirow{3}{*}{FiQA}
  & EigenLI & \textbf{0.3337} & \textbf{0.3928} & \textbf{0.4159} & \multirow{3}{*}{0.4315} & \textbf{0.3425} & \textbf{0.4029} & \textbf{0.4246} & \multirow{3}{*}{0.4391} & \textbf{0.3433} & \textbf{0.4035} & \textbf{0.4251} & \multirow{3}{*}{0.4396} \\
  & KMeans++ & 0.2345 & 0.2967 & 0.3481 & & 0.2449 & 0.3054 & 0.3577 & & 0.2456 & 0.3063 & 0.3583 & \\
  & Ward & 0.2823 & 0.3294 & 0.3758 & & 0.2911 & 0.3382 & 0.3857 & & 0.2919 & 0.3389 & 0.3863 & \\
\hline
\multirow{3}{*}{HotpotQA}
  & EigenLI & \textbf{0.7607} & \textbf{0.8275} & \textbf{0.8464} & \multirow{3}{*}{0.8836} & \textbf{0.7638} & \textbf{0.8294} & \textbf{0.8484} & \multirow{3}{*}{0.8850} & \textbf{0.7639} & \textbf{0.8295} & \textbf{0.8484} & \multirow{3}{*}{0.8851} \\
  & KMeans++ & 0.6224 & 0.7207 & 0.8154 & & 0.6274 & 0.7243 & 0.8177 & & 0.6276 & 0.7245 & 0.8178 & \\
  & Ward & 0.7152 & 0.7862 & 0.8371 & & 0.7186 & 0.7888 & 0.8394 & & 0.7188 & 0.7889 & 0.8395 & \\
\hline
\multirow{3}{*}{MS MARCO}
  & EigenLI & \textbf{0.3325} & \textbf{0.3751} & 0.3803 & \multirow{3}{*}{0.3994} & \textbf{0.3438} & \textbf{0.3853} & 0.3906 & \multirow{3}{*}{0.4093} & \textbf{0.3443} & \textbf{0.3857} & 0.3910 & \multirow{3}{*}{0.4096} \\
  & KMeans++ & 0.2621 & 0.3280 & 0.3760 & & 0.2738 & 0.3390 & 0.3867 & & 0.2745 & 0.3395 & 0.3871 & \\
  & Ward & 0.3152 & 0.3642 & \textbf{0.3880} & & 0.3268 & 0.3746 & \textbf{0.3982} & & 0.3273 & 0.3750 & \textbf{0.3985} & \\
\hline
\multirow{3}{*}{NQ}
  & EigenLI & \textbf{0.3916} & \textbf{0.4570} & \textbf{0.4973} & \multirow{3}{*}{0.5446} & \textbf{0.4016} & \textbf{0.4667} & \textbf{0.5030} & \multirow{3}{*}{0.5522} & \textbf{0.4019} & \textbf{0.4670} & \textbf{0.5033} & \multirow{3}{*}{0.5524} \\
  & KMeans++ & 0.2941 & 0.3416 & 0.4425 & & 0.3046 & 0.3528 & 0.4509 & & 0.3051 & 0.3532 & 0.4512 & \\
  & Ward & 0.3282 & 0.4232 & 0.4630 & & 0.3387 & 0.4327 & 0.4725 & & 0.3392 & 0.4330 & 0.4727 & \\
\hline
\multirow{3}{*}{Quora}
  & EigenLI & \textbf{0.8389} & \textbf{0.8544} & 0.8287 & \multirow{3}{*}{0.8441} & \textbf{0.8408} & \textbf{0.8557} & 0.8306 & \multirow{3}{*}{0.8457} & \textbf{0.8408} & \textbf{0.8557} & 0.8306 & \multirow{3}{*}{0.8458} \\
  & KMeans++ & 0.8190 & 0.8439 & \textbf{0.8442} & & 0.8211 & 0.8455 & \textbf{0.8458} & & 0.8212 & 0.8456 & \textbf{0.8459} & \\
  & Ward & 0.8337 & 0.8446 & 0.8442 & & 0.8356 & 0.8462 & 0.8458 & & 0.8356 & 0.8462 & 0.8459 & \\
\hline
\multirow{3}{*}{SCIDOCS}
  & EigenLI & \textbf{0.2086} & 0.2520 & \textbf{0.2846} & \multirow{3}{*}{0.2740} & \textbf{0.2211} & 0.2642 & \textbf{0.2931} & \multirow{3}{*}{0.2840} & \textbf{0.2218} & 0.2649 & \textbf{0.2936} & \multirow{3}{*}{0.2846} \\
  & KMeans++ & 0.1881 & 0.2002 & 0.2507 & & 0.2005 & 0.2121 & 0.2627 & & 0.2015 & 0.2129 & 0.2635 & \\
  & Ward & 0.2018 & \textbf{0.2550} & 0.2521 & & 0.2136 & \textbf{0.2663} & 0.2620 & & 0.2147 & \textbf{0.2672} & 0.2627 & \\
\hline
\multirow{3}{*}{SciFact}
  & EigenLI & \textbf{0.5573} & \textbf{0.6128} & \textbf{0.6593} & \multirow{3}{*}{0.6682} & \textbf{0.5628} & \textbf{0.6174} & \textbf{0.6626} & \multirow{3}{*}{0.6723} & \textbf{0.5632} & \textbf{0.6177} & \textbf{0.6629} & \multirow{3}{*}{0.6725} \\
  & KMeans++ & 0.5100 & 0.5533 & 0.6005 & & 0.5174 & 0.5595 & 0.6062 & & 0.5177 & 0.5597 & 0.6065 & \\
  & Ward & 0.5106 & 0.5735 & 0.6073 & & 0.5175 & 0.5790 & 0.6128 & & 0.5179 & 0.5792 & 0.6130 & \\
\hline
\multirow{3}{*}{TREC-COVID}
  & EigenLI & \textbf{0.8289} & \textbf{0.8518} & \textbf{0.9407} & \multirow{3}{*}{0.9056} & \textbf{0.8303} & \textbf{0.8518} & \textbf{0.9407} & \multirow{3}{*}{0.9056} & \textbf{0.8303} & \textbf{0.8518} & \textbf{0.9407} & \multirow{3}{*}{0.9056} \\
  & KMeans++ & 0.6389 & 0.7459 & 0.8129 & & 0.6463 & 0.7504 & 0.8147 & & 0.6463 & 0.7504 & 0.8147 & \\
  & Ward & 0.6930 & 0.7730 & 0.8237 & & 0.6986 & 0.7758 & 0.8237 & & 0.6986 & 0.7758 & 0.8237 & \\
\hline
\multirow{3}{*}{Webis-Touche2020}
  & EigenLI & \textbf{0.4123} & \textbf{0.4752} & \textbf{0.5739} & \multirow{3}{*}{0.4985} & \textbf{0.4182} & \textbf{0.4834} & \textbf{0.5810} & \multirow{3}{*}{0.5096} & \textbf{0.4182} & \textbf{0.4834} & \textbf{0.5810} & \multirow{3}{*}{0.5096} \\
  & KMeans++ & 0.2478 & 0.3167 & 0.3740 & & 0.2621 & 0.3369 & 0.3875 & & 0.2625 & 0.3369 & 0.3875 & \\
  & Ward & 0.3099 & 0.3884 & 0.4405 & & 0.3222 & 0.3999 & 0.4473 & & 0.3226 & 0.3999 & 0.4473 & \\
\bottomrule
\end{tabular}
}
\end{table}
\end{landscape}

\clearpage
\section{Detailed Anisotropic-Model BEIR Results}
\label{sec:anisotropic_appendix}

We evaluate AnswerAI-ColBERT-small and GTE-ModernColBERT with both uncentered (U) and centered (C) representations. Recall that for a document or query with
valid token vectors $\{x_i\}_{i=1}^{n}$, centering computes $\bar{x}=n^{-1}\sum_{i=1}^{n}x_i$ and replaces every valid token vector with
$x_i-\bar{x}$. Documents and queries are centered independently, padding tokens are excluded from the mean, and the centered vectors are not
renormalized. Document centering is performed during offline encoding, while query centering is performed at query-encoding time; consequently, the means need not be
stored after the corresponding representations have been constructed.

\Cref{tab:anisotropic_centering_at10} summarizes the effect of centering on
EigenLI-SV-32 and MUVERA at cutoff 10 for both models. 
\Cref{tab:answerai_all,tab:gte_all} retain the complete comparison with full
MaxSim, EigenLI-MV, EigenLI-SV, $k$-means++, Ward, and MUVERA at cutoffs 10,
100, and 1000. Only the MUVERA rows use the centered representations described
above; the other rows are unchanged.

\begin{landscape}
\begin{table}[p]
\centering
\scriptsize
\setlength{\tabcolsep}{1.8pt}
\renewcommand{\arraystretch}{0.88}
\caption{Effect of centering on EigenLI-SV-32 and MUVERA for
AnswerAI-ColBERT-small and GTE-ModernColBERT. U and C denote uncentered and
centered results, respectively, and $\Delta=C-U$.}
\label{tab:anisotropic_centering_at10}
\resizebox{\linewidth}{!}{%
\begin{tabular}{ll|rrr|rrr|rrr||rrr|rrr|rrr}
\toprule
\multirow{3}{*}{Dataset} & \multirow{3}{*}{Method}
& \multicolumn{9}{c||}{\textbf{AnswerAI-ColBERT-small}}
& \multicolumn{9}{c}{\textbf{GTE-ModernColBERT}} \\
\cmidrule(lr){3-11}\cmidrule(lr){12-20}
& & \multicolumn{3}{c|}{Recall@10}
& \multicolumn{3}{c|}{nDCG@10}
& \multicolumn{3}{c||}{MRR@10}
& \multicolumn{3}{c|}{Recall@10}
& \multicolumn{3}{c|}{nDCG@10}
& \multicolumn{3}{c}{MRR@10} \\
\cmidrule(lr){3-5}\cmidrule(lr){6-8}\cmidrule(lr){9-11}
\cmidrule(lr){12-14}\cmidrule(lr){15-17}\cmidrule(lr){18-20}
& & U & C & $\Delta$ & U & C & $\Delta$ & U & C & $\Delta$
& U & C & $\Delta$ & U & C & $\Delta$ & U & C & $\Delta$ \\
\midrule
ArguAna & EigenLI-SV-32 & 0.6400 & 0.6300 & -0.0100 & 0.4339 & 0.4249 & -0.0090 & 0.3696 & 0.3613 & -0.0082 & 0.7167 & 0.6667 & -0.0500 & 0.4791 & 0.4415 & -0.0376 & 0.4042 & 0.3699 & -0.0343 \\
& MUVERA & 0.1167 & 0.5933 & +0.4767 & 0.0488 & 0.3837 & +0.3349 & 0.0283 & 0.3187 & +0.2905 & 0.0267 & 0.6467 & +0.6200 & 0.0111 & 0.4233 & +0.4122 & 0.0065 & 0.3536 & +0.3471 \\
\midrule
Climate-FEVER & EigenLI-SV-32 & 0.3027 & 0.2442 & -0.0585 & 0.2351 & 0.1843 & -0.0508 & 0.2784 & 0.2335 & -0.0449 & 0.3629 & 0.2388 & -0.1241 & 0.2957 & 0.1784 & -0.1173 & 0.3689 & 0.2165 & -0.1524 \\
& MUVERA & 0.0022 & 0.0485 & +0.0463 & 0.0011 & 0.0360 & +0.0349 & 0.0011 & 0.0450 & +0.0439 & 0.0000 & 0.0381 & +0.0381 & 0.0000 & 0.0305 & +0.0305 & 0.0000 & 0.0478 & +0.0478 \\
\midrule
DBPedia-Entity & EigenLI-SV-32 & 0.4333 & 0.4291 & -0.0042 & 0.4567 & 0.4488 & -0.0079 & 0.7399 & 0.7499 & +0.0099 & 0.3605 & 0.4237 & +0.0632 & 0.3752 & 0.4191 & +0.0440 & 0.6324 & 0.6925 & +0.0601 \\
& MUVERA & 0.0104 & 0.1684 & +0.1580 & 0.0054 & 0.1639 & +0.1585 & 0.0082 & 0.3448 & +0.3366 & 0.0512 & 0.1216 & +0.0703 & 0.0535 & 0.1125 & +0.0590 & 0.1447 & 0.2319 & +0.0873 \\
\midrule
FEVER & EigenLI-SV-32 & 0.8937 & 0.8766 & -0.0171 & 0.7789 & 0.7371 & -0.0418 & 0.7702 & 0.7177 & -0.0525 & 0.8779 & 0.8192 & -0.0587 & 0.7464 & 0.6746 & -0.0718 & 0.7304 & 0.6521 & -0.0783 \\
& MUVERA & 0.0004 & 0.3550 & +0.3547 & 0.0001 & 0.2517 & +0.2515 & 0.0001 & 0.2294 & +0.2293 & 0.0002 & 0.2502 & +0.2500 & 0.0001 & 0.1829 & +0.1828 & 0.0000 & 0.1687 & +0.1687 \\
\midrule
FiQA & EigenLI-SV-32 & 0.4010 & 0.4179 & +0.0169 & 0.3405 & 0.3589 & +0.0185 & 0.4155 & 0.4425 & +0.0270 & 0.4722 & 0.4573 & -0.0150 & 0.4052 & 0.3987 & -0.0065 & 0.4842 & 0.4830 & -0.0012 \\
& MUVERA & 0.0002 & 0.2110 & +0.2108 & 0.0002 & 0.1589 & +0.1587 & 0.0004 & 0.1939 & +0.1935 & 0.0094 & 0.2376 & +0.2282 & 0.0053 & 0.1818 & +0.1765 & 0.0056 & 0.2137 & +0.2081 \\
\midrule
HotpotQA & EigenLI-SV-32 & 0.6980 & 0.6910 & -0.0070 & 0.6757 & 0.6760 & +0.0003 & 0.8479 & 0.8653 & +0.0174 & 0.6792 & 0.6982 & +0.0190 & 0.6501 & 0.6734 & +0.0232 & 0.8148 & 0.8475 & +0.0327 \\
& MUVERA & 0.0027 & 0.4988 & +0.4961 & 0.0017 & 0.4750 & +0.4733 & 0.0020 & 0.6404 & +0.6384 & 0.0002 & 0.4684 & +0.4682 & 0.0001 & 0.4531 & +0.4530 & 0.0000 & 0.6186 & +0.6186 \\
\midrule
MS MARCO & EigenLI-SV-32 & 0.6168 & 0.6114 & -0.0054 & 0.4043 & 0.3966 & -0.0077 & 0.3410 & 0.3328 & -0.0082 & 0.6131 & 0.6131 & -0.0001 & 0.3998 & 0.4019 & +0.0021 & 0.3358 & 0.3388 & +0.0030 \\
& MUVERA & 0.0069 & 0.3999 & +0.3930 & 0.0041 & 0.2438 & +0.2397 & 0.0033 & 0.1979 & +0.1947 & 0.0229 & 0.3678 & +0.3449 & 0.0131 & 0.2322 & +0.2190 & 0.0103 & 0.1928 & +0.1825 \\
\midrule
NQ & EigenLI-SV-32 & 0.6611 & 0.6850 & +0.0239 & 0.4592 & 0.4755 & +0.0164 & 0.4089 & 0.4226 & +0.0137 & 0.7222 & 0.6956 & -0.0267 & 0.4983 & 0.4809 & -0.0174 & 0.4467 & 0.4292 & -0.0175 \\
& MUVERA & 0.0000 & 0.2439 & +0.2439 & 0.0000 & 0.1420 & +0.1420 & 0.0000 & 0.1158 & +0.1158 & 0.0000 & 0.3300 & +0.3300 & 0.0000 & 0.2010 & +0.2010 & 0.0000 & 0.1709 & +0.1709 \\
\midrule
Quora & EigenLI-SV-32 & 0.9069 & 0.8766 & -0.0304 & 0.8205 & 0.7834 & -0.0371 & 0.8126 & 0.7740 & -0.0387 & 0.8428 & 0.7012 & -0.1416 & 0.7048 & 0.5637 & -0.1411 & 0.6776 & 0.5367 & -0.1408 \\
& MUVERA & 0.0512 & 0.9117 & +0.8605 & 0.0369 & 0.8194 & +0.7825 & 0.0351 & 0.8129 & +0.7778 & 0.0918 & 0.7648 & +0.6730 & 0.0564 & 0.6187 & +0.5623 & 0.0488 & 0.5922 & +0.5434 \\
\midrule
SCIDOCS & EigenLI-SV-32 & 0.1495 & 0.1495 & +0.0000 & 0.1382 & 0.1376 & -0.0006 & 0.2369 & 0.2314 & -0.0055 & 0.1635 & 0.1435 & -0.0200 & 0.1561 & 0.1451 & -0.0110 & 0.2778 & 0.2845 & +0.0068 \\
& MUVERA & 0.0185 & 0.0745 & +0.0560 & 0.0149 & 0.0663 & +0.0514 & 0.0258 & 0.1120 & +0.0862 & 0.0080 & 0.1005 & +0.0925 & 0.0074 & 0.1004 & +0.0930 & 0.0163 & 0.1936 & +0.1773 \\
\midrule
SciFact & EigenLI-SV-32 & 0.7978 & 0.8163 & +0.0186 & 0.6777 & 0.7016 & +0.0239 & 0.6431 & 0.6740 & +0.0309 & 0.8424 & 0.8297 & -0.0127 & 0.7177 & 0.7204 & +0.0027 & 0.6871 & 0.6909 & +0.0038 \\
& MUVERA & 0.1673 & 0.7643 & +0.5969 & 0.0950 & 0.6198 & +0.5248 & 0.0750 & 0.5843 & +0.5093 & 0.0783 & 0.7573 & +0.6789 & 0.0415 & 0.6074 & +0.5659 & 0.0322 & 0.5688 & +0.5366 \\
\midrule
TREC-COVID & EigenLI-SV-32 & 0.7380 & 0.7960 & +0.0580 & 0.7143 & 0.7479 & +0.0336 & 0.9290 & 0.9100 & -0.0190 & 0.7220 & 0.8000 & +0.0780 & 0.6593 & 0.7406 & +0.0813 & 0.8206 & 0.8945 & +0.0739 \\
& MUVERA & 0.0040 & 0.3500 & +0.3460 & 0.0057 & 0.3398 & +0.3342 & 0.0220 & 0.5888 & +0.5668 & 0.0280 & 0.5560 & +0.5280 & 0.0243 & 0.5174 & +0.4931 & 0.0900 & 0.7036 & +0.6135 \\
\midrule
Webis-Touche2020 & EigenLI-SV-32 & 0.2035 & 0.2657 & +0.0622 & 0.2026 & 0.2593 & +0.0568 & 0.4598 & 0.5291 & +0.0693 & 0.2915 & 0.3231 & +0.0316 & 0.2965 & 0.3401 & +0.0436 & 0.5279 & 0.5936 & +0.0657 \\
& MUVERA & 0.0122 & 0.0297 & +0.0174 & 0.0127 & 0.0257 & +0.0130 & 0.0383 & 0.0700 & +0.0317 & 0.0041 & 0.1425 & +0.1384 & 0.0060 & 0.1463 & +0.1404 & 0.0204 & 0.3841 & +0.3637 \\
\midrule
\bottomrule
\end{tabular}%
}
\end{table}
\end{landscape}

\clearpage

\begingroup
\scriptsize
\setlength{\tabcolsep}{2.5pt}
\begin{longtable}{llc|rrr|rrr|rrr}
\caption{AnswerAI-ColBERT-small BEIR effectiveness results; results use centered MUVERA.}\label{tab:answerai_all} \\
\toprule
Dataset & Method & $k$ & R@10 & R@100 & R@1000 & n@10 & n@100 & n@1000 & M@10 & M@100 & M@1000 \\
\midrule
\endfirsthead
\multicolumn{12}{c}{\tablename\ \thetable{} -- continued} \\
\toprule
Dataset & Method & $k$ & R@10 & R@100 & R@1000 & n@10 & n@100 & n@1000 & M@10 & M@100 & M@1000 \\
\midrule
\endhead
\midrule
\multicolumn{12}{r}{Continued on next page} \\
\endfoot
\bottomrule
\endlastfoot
ArguAna & Full MaxSim & - & 0.6633 & 0.9100 & 0.9833 & 0.4283 & 0.4820 & 0.4915 & 0.3550 & 0.3665 & 0.3668 \\
\cmidrule{2-12}
& & 8 & 0.6533 & 0.9100 & 0.9867 & 0.4144 & 0.4697 & 0.4794 & 0.3401 & 0.3517 & 0.3520 \\
& EigenLI & 16 & 0.6533 & 0.9267 & 0.9833 & 0.4432 & 0.5031 & 0.5102 & 0.3775 & 0.3904 & 0.3907 \\
& & 32 & 0.6400 & 0.9100 & 0.9833 & 0.4339 & 0.4918 & 0.5009 & 0.3696 & 0.3815 & 0.3818 \\
\cmidrule{2-12}
& & 8 & 0.5167 & 0.8600 & 0.9833 & 0.3359 & 0.4072 & 0.4230 & 0.2811 & 0.2948 & 0.2954 \\
& KMeans++ & 16 & 0.6033 & 0.8967 & 0.9800 & 0.3952 & 0.4598 & 0.4707 & 0.3315 & 0.3455 & 0.3460 \\
& & 32 & 0.6700 & 0.9100 & 0.9900 & 0.4411 & 0.4939 & 0.5041 & 0.3700 & 0.3815 & 0.3819 \\
\cmidrule{2-12}
& & 8 & 0.6033 & 0.9167 & 0.9867 & 0.3888 & 0.4572 & 0.4660 & 0.3222 & 0.3369 & 0.3372 \\
& Ward & 16 & 0.6433 & 0.9233 & 0.9867 & 0.4395 & 0.5001 & 0.5082 & 0.3751 & 0.3877 & 0.3881 \\
& & 32 & 0.7000 & 0.9233 & 0.9900 & 0.4719 & 0.5202 & 0.5285 & 0.4014 & 0.4115 & 0.4118 \\
\cmidrule{2-12}
& MUVERA & - & 0.5933 & 0.8767 & 0.9700 & 0.3837 & 0.4440 & 0.4557 & 0.3187 & 0.3311 & 0.3315 \\
\midrule
Climate-FEVER & Full MaxSim & - & 0.3849 & 0.6712 & 0.8359 & 0.3110 & 0.3924 & 0.4216 & 0.3983 & 0.4092 & 0.4095 \\
\cmidrule{2-12}
& & 8 & 0.3450 & 0.5694 & 0.7281 & 0.2884 & 0.3496 & 0.3772 & 0.3527 & 0.3634 & 0.3636 \\
& EigenLI & 16 & 0.3317 & 0.5442 & 0.7410 & 0.2789 & 0.3368 & 0.3717 & 0.3439 & 0.3537 & 0.3543 \\
& & 32 & 0.3027 & 0.5273 & 0.7212 & 0.2351 & 0.2947 & 0.3283 & 0.2784 & 0.2885 & 0.2892 \\
\cmidrule{2-12}
& & 8 & 0.3281 & 0.5244 & 0.6899 & 0.2698 & 0.3255 & 0.3548 & 0.3308 & 0.3382 & 0.3387 \\
& KMeans++ & 16 & 0.3117 & 0.5142 & 0.6969 & 0.2345 & 0.2910 & 0.3233 & 0.2708 & 0.2806 & 0.2812 \\
& & 32 & 0.3008 & 0.5180 & 0.7082 & 0.2388 & 0.3000 & 0.3333 & 0.2949 & 0.3058 & 0.3064 \\
\cmidrule{2-12}
& & 8 & 0.3572 & 0.5786 & 0.7386 & 0.3082 & 0.3707 & 0.3993 & 0.3871 & 0.3959 & 0.3963 \\
& Ward & 16 & 0.3562 & 0.5765 & 0.7432 & 0.2991 & 0.3612 & 0.3910 & 0.3684 & 0.3786 & 0.3790 \\
& & 32 & 0.3027 & 0.5306 & 0.7256 & 0.2448 & 0.3075 & 0.3412 & 0.2941 & 0.3060 & 0.3065 \\
\cmidrule{2-12}
& MUVERA & - & 0.0485 & 0.1298 & 0.2197 & 0.0360 & 0.0543 & 0.0690 & 0.0450 & 0.0495 & 0.0501 \\
\midrule
DBPedia-Entity & Full MaxSim & - & 0.5318 & 0.6663 & 0.8824 & 0.5427 & 0.5911 & 0.6591 & 0.8408 & 0.8429 & 0.8430 \\
\cmidrule{2-12}
& & 8 & 0.3364 & 0.3963 & 0.6462 & 0.3472 & 0.3734 & 0.4511 & 0.6411 & 0.6431 & 0.6433 \\
& EigenLI & 16 & 0.3182 & 0.4006 & 0.6301 & 0.3304 & 0.3639 & 0.4355 & 0.6203 & 0.6241 & 0.6241 \\
& & 32 & 0.4333 & 0.5599 & 0.8030 & 0.4567 & 0.5047 & 0.5783 & 0.7399 & 0.7432 & 0.7432 \\
\cmidrule{2-12}
& & 8 & 0.2405 & 0.2782 & 0.4298 & 0.2423 & 0.2509 & 0.2992 & 0.5225 & 0.5294 & 0.5296 \\
& KMeans++ & 16 & 0.2814 & 0.3246 & 0.4955 & 0.2838 & 0.2965 & 0.3469 & 0.5613 & 0.5650 & 0.5651 \\
& & 32 & 0.4206 & 0.4984 & 0.6757 & 0.4312 & 0.4548 & 0.5113 & 0.7454 & 0.7476 & 0.7476 \\
\cmidrule{2-12}
& & 8 & 0.3177 & 0.3495 & 0.5586 & 0.3420 & 0.3531 & 0.4156 & 0.6562 & 0.6600 & 0.6603 \\
& Ward & 16 & 0.3019 & 0.3537 & 0.5342 & 0.3111 & 0.3323 & 0.3851 & 0.6296 & 0.6331 & 0.6333 \\
& & 32 & 0.4717 & 0.5463 & 0.7514 & 0.4818 & 0.5045 & 0.5692 & 0.8071 & 0.8071 & 0.8072 \\
\cmidrule{2-12}
& MUVERA & - & 0.1684 & 0.2766 & 0.4897 & 0.1639 & 0.2103 & 0.2746 & 0.3448 & 0.3574 & 0.3576 \\
\midrule
FEVER & Full MaxSim & - & 0.9510 & 0.9720 & 0.9821 & 0.9195 & 0.9254 & 0.9270 & 0.9424 & 0.9427 & 0.9427 \\
\cmidrule{2-12}
& & 8 & 0.9003 & 0.9415 & 0.9614 & 0.8030 & 0.8125 & 0.8156 & 0.8046 & 0.8061 & 0.8062 \\
& EigenLI & 16 & 0.8829 & 0.9381 & 0.9578 & 0.7586 & 0.7715 & 0.7744 & 0.7475 & 0.7498 & 0.7498 \\
& & 32 & 0.8937 & 0.9464 & 0.9644 & 0.7789 & 0.7916 & 0.7943 & 0.7702 & 0.7724 & 0.7724 \\
\cmidrule{2-12}
& & 8 & 0.6811 & 0.8248 & 0.9039 & 0.5226 & 0.5538 & 0.5643 & 0.4921 & 0.4983 & 0.4986 \\
& KMeans++ & 16 & 0.7538 & 0.8677 & 0.9201 & 0.5830 & 0.6084 & 0.6153 & 0.5502 & 0.5552 & 0.5555 \\
& & 32 & 0.8229 & 0.9129 & 0.9441 & 0.6642 & 0.6849 & 0.6892 & 0.6402 & 0.6443 & 0.6444 \\
\cmidrule{2-12}
& & 8 & 0.8898 & 0.9341 & 0.9507 & 0.7824 & 0.7924 & 0.7948 & 0.7804 & 0.7822 & 0.7822 \\
& Ward & 16 & 0.8756 & 0.9310 & 0.9517 & 0.7557 & 0.7686 & 0.7716 & 0.7478 & 0.7503 & 0.7503 \\
& & 32 & 0.8605 & 0.9315 & 0.9555 & 0.7290 & 0.7456 & 0.7490 & 0.7143 & 0.7175 & 0.7176 \\
\cmidrule{2-12}
& MUVERA & - & 0.3550 & 0.5912 & 0.7963 & 0.2517 & 0.3012 & 0.3273 & 0.2294 & 0.2391 & 0.2400 \\
\midrule
FiQA & Full MaxSim & - & 0.4752 & 0.7158 & 0.8720 & 0.3986 & 0.4635 & 0.4889 & 0.4720 & 0.4803 & 0.4806 \\
\cmidrule{2-12}
& & 8 & 0.3571 & 0.6001 & 0.8342 & 0.2946 & 0.3591 & 0.3954 & 0.3643 & 0.3722 & 0.3729 \\
& EigenLI & 16 & 0.3822 & 0.6089 & 0.8175 & 0.3263 & 0.3862 & 0.4193 & 0.4007 & 0.4085 & 0.4092 \\
& & 32 & 0.4010 & 0.6070 & 0.8130 & 0.3405 & 0.3948 & 0.4280 & 0.4155 & 0.4228 & 0.4234 \\
\cmidrule{2-12}
& & 8 & 0.2368 & 0.4372 & 0.7365 & 0.1890 & 0.2411 & 0.2874 & 0.2413 & 0.2500 & 0.2510 \\
& KMeans++ & 16 & 0.2751 & 0.4753 & 0.7004 & 0.2266 & 0.2788 & 0.3143 & 0.2843 & 0.2926 & 0.2933 \\
& & 32 & 0.3307 & 0.5268 & 0.7248 & 0.2773 & 0.3285 & 0.3602 & 0.3446 & 0.3519 & 0.3524 \\
\cmidrule{2-12}
& & 8 & 0.3421 & 0.5687 & 0.8174 & 0.2850 & 0.3456 & 0.3839 & 0.3484 & 0.3563 & 0.3571 \\
& Ward & 16 & 0.3350 & 0.5635 & 0.7770 & 0.2821 & 0.3427 & 0.3762 & 0.3515 & 0.3589 & 0.3595 \\
& & 32 & 0.3637 & 0.5516 & 0.7441 & 0.3050 & 0.3550 & 0.3858 & 0.3720 & 0.3788 & 0.3794 \\
\cmidrule{2-12}
& MUVERA & - & 0.2110 & 0.4030 & 0.6513 & 0.1589 & 0.2080 & 0.2465 & 0.1939 & 0.2019 & 0.2027 \\
\midrule
HotpotQA & Full MaxSim & - & 0.7830 & 0.8842 & 0.9417 & 0.7660 & 0.7923 & 0.8010 & 0.9222 & 0.9229 & 0.9229 \\
\cmidrule{2-12}
& & 8 & 0.6314 & 0.7501 & 0.8482 & 0.6061 & 0.6364 & 0.6512 & 0.7658 & 0.7683 & 0.7685 \\
& EigenLI & 16 & 0.6632 & 0.7815 & 0.8650 & 0.6355 & 0.6660 & 0.6785 & 0.7976 & 0.7997 & 0.7998 \\
& & 32 & 0.6980 & 0.8079 & 0.8852 & 0.6757 & 0.7042 & 0.7158 & 0.8479 & 0.8495 & 0.8496 \\
\cmidrule{2-12}
& & 8 & 0.4723 & 0.6423 & 0.7762 & 0.4343 & 0.4775 & 0.4975 & 0.5661 & 0.5721 & 0.5724 \\
& KMeans++ & 16 & 0.5811 & 0.7173 & 0.8108 & 0.5502 & 0.5855 & 0.5995 & 0.7028 & 0.7066 & 0.7067 \\
& & 32 & 0.7006 & 0.8062 & 0.8771 & 0.6747 & 0.7020 & 0.7127 & 0.8349 & 0.8368 & 0.8369 \\
\cmidrule{2-12}
& & 8 & 0.6209 & 0.7302 & 0.8228 & 0.6001 & 0.6279 & 0.6418 & 0.7624 & 0.7649 & 0.7651 \\
& Ward & 16 & 0.6651 & 0.7702 & 0.8426 & 0.6421 & 0.6693 & 0.6803 & 0.8044 & 0.8065 & 0.8066 \\
& & 32 & 0.7243 & 0.8272 & 0.8927 & 0.7018 & 0.7284 & 0.7383 & 0.8621 & 0.8634 & 0.8634 \\
\cmidrule{2-12}
& MUVERA & - & 0.4988 & 0.6378 & 0.7486 & 0.4750 & 0.5107 & 0.5274 & 0.6404 & 0.6452 & 0.6455 \\
\midrule
MS MARCO & Full MaxSim & - & 0.6564 & 0.9039 & 0.9823 & 0.4352 & 0.4893 & 0.4997 & 0.3692 & 0.3801 & 0.3804 \\
\cmidrule{2-12}
& & 8 & 0.5334 & 0.8175 & 0.9436 & 0.3402 & 0.4012 & 0.4174 & 0.2834 & 0.2954 & 0.2960 \\
& EigenLI & 16 & 0.5411 & 0.8265 & 0.9497 & 0.3452 & 0.4057 & 0.4216 & 0.2874 & 0.2990 & 0.2995 \\
& & 32 & 0.6168 & 0.8774 & 0.9715 & 0.4043 & 0.4609 & 0.4732 & 0.3410 & 0.3522 & 0.3526 \\
\cmidrule{2-12}
& & 8 & 0.3332 & 0.5819 & 0.7763 & 0.2118 & 0.2635 & 0.2879 & 0.1766 & 0.1864 & 0.1872 \\
& KMeans++ & 16 & 0.4298 & 0.7016 & 0.8653 & 0.2720 & 0.3294 & 0.3502 & 0.2259 & 0.2370 & 0.2377 \\
& & 32 & 0.5799 & 0.8350 & 0.9448 & 0.3808 & 0.4357 & 0.4499 & 0.3225 & 0.3334 & 0.3338 \\
\cmidrule{2-12}
& & 8 & 0.5010 & 0.7936 & 0.9317 & 0.3203 & 0.3825 & 0.4003 & 0.2670 & 0.2790 & 0.2797 \\
& Ward & 16 & 0.5072 & 0.7932 & 0.9272 & 0.3195 & 0.3800 & 0.3972 & 0.2637 & 0.2753 & 0.2759 \\
& & 32 & 0.6106 & 0.8675 & 0.9627 & 0.4029 & 0.4585 & 0.4710 & 0.3413 & 0.3523 & 0.3528 \\
\cmidrule{2-12}
& MUVERA & - & 0.3999 & 0.6898 & 0.8739 & 0.2438 & 0.3047 & 0.3282 & 0.1979 & 0.2096 & 0.2104 \\
\midrule
NQ & Full MaxSim & - & 0.8183 & 0.9594 & 0.9967 & 0.5846 & 0.6185 & 0.6236 & 0.5232 & 0.5297 & 0.5298 \\
\cmidrule{2-12}
& & 8 & 0.5383 & 0.8000 & 0.9233 & 0.3594 & 0.4173 & 0.4334 & 0.3164 & 0.3269 & 0.3274 \\
& EigenLI & 16 & 0.5883 & 0.8572 & 0.9533 & 0.3682 & 0.4266 & 0.4390 & 0.3113 & 0.3212 & 0.3215 \\
& & 32 & 0.6611 & 0.9000 & 0.9722 & 0.4592 & 0.5144 & 0.5241 & 0.4089 & 0.4190 & 0.4194 \\
\cmidrule{2-12}
& & 8 & 0.3233 & 0.5722 & 0.7706 & 0.2018 & 0.2542 & 0.2794 & 0.1760 & 0.1846 & 0.1853 \\
& KMeans++ & 16 & 0.4150 & 0.6833 & 0.8383 & 0.2535 & 0.3131 & 0.3337 & 0.2148 & 0.2263 & 0.2269 \\
& & 32 & 0.5267 & 0.7911 & 0.9217 & 0.3780 & 0.4383 & 0.4552 & 0.3474 & 0.3587 & 0.3593 \\
\cmidrule{2-12}
& & 8 & 0.4950 & 0.7328 & 0.8950 & 0.3297 & 0.3818 & 0.4025 & 0.2916 & 0.3011 & 0.3017 \\
& Ward & 16 & 0.5367 & 0.7761 & 0.9122 & 0.3355 & 0.3891 & 0.4069 & 0.2829 & 0.2922 & 0.2928 \\
& & 32 & 0.6006 & 0.8372 & 0.9400 & 0.4211 & 0.4745 & 0.4878 & 0.3782 & 0.3881 & 0.3884 \\
\cmidrule{2-12}
& MUVERA & - & 0.2439 & 0.5511 & 0.7656 & 0.1420 & 0.2084 & 0.2361 & 0.1158 & 0.1285 & 0.1293 \\
\midrule
Quora & Full MaxSim & - & 0.9540 & 0.9947 & 0.9998 & 0.8768 & 0.8892 & 0.8904 & 0.8680 & 0.8691 & 0.8691 \\
\cmidrule{2-12}
& & 8 & 0.9491 & 0.9924 & 0.9992 & 0.8727 & 0.8854 & 0.8868 & 0.8655 & 0.8667 & 0.8667 \\
& EigenLI & 16 & 0.9420 & 0.9924 & 0.9991 & 0.8630 & 0.8778 & 0.8792 & 0.8545 & 0.8560 & 0.8560 \\
& & 32 & 0.9069 & 0.9736 & 0.9931 & 0.8205 & 0.8383 & 0.8416 & 0.8126 & 0.8147 & 0.8148 \\
\cmidrule{2-12}
& & 8 & 0.9392 & 0.9873 & 0.9974 & 0.8544 & 0.8680 & 0.8700 & 0.8453 & 0.8465 & 0.8465 \\
& KMeans++ & 16 & 0.9542 & 0.9947 & 0.9996 & 0.8788 & 0.8911 & 0.8922 & 0.8704 & 0.8715 & 0.8715 \\
& & 32 & 0.9540 & 0.9947 & 0.9998 & 0.8769 & 0.8893 & 0.8904 & 0.8681 & 0.8691 & 0.8691 \\
\cmidrule{2-12}
& & 8 & 0.9474 & 0.9918 & 0.9989 & 0.8680 & 0.8809 & 0.8824 & 0.8596 & 0.8608 & 0.8608 \\
& Ward & 16 & 0.9546 & 0.9950 & 0.9998 & 0.8786 & 0.8909 & 0.8920 & 0.8699 & 0.8710 & 0.8710 \\
& & 32 & 0.9540 & 0.9947 & 0.9998 & 0.8769 & 0.8893 & 0.8904 & 0.8680 & 0.8691 & 0.8691 \\
\cmidrule{2-12}
& MUVERA & - & 0.9117 & 0.9855 & 0.9980 & 0.8194 & 0.8394 & 0.8418 & 0.8129 & 0.8152 & 0.8152 \\
\midrule
SCIDOCS & Full MaxSim & - & 0.1635 & 0.4310 & 0.6908 & 0.1574 & 0.2472 & 0.2994 & 0.2770 & 0.2912 & 0.2916 \\
\cmidrule{2-12}
& & 8 & 0.1310 & 0.3565 & 0.6795 & 0.1165 & 0.1935 & 0.2596 & 0.1932 & 0.2073 & 0.2082 \\
& EigenLI & 16 & 0.1575 & 0.3625 & 0.6600 & 0.1341 & 0.2042 & 0.2652 & 0.2183 & 0.2292 & 0.2302 \\
& & 32 & 0.1495 & 0.3580 & 0.6110 & 0.1382 & 0.2082 & 0.2606 & 0.2369 & 0.2467 & 0.2474 \\
\cmidrule{2-12}
& & 8 & 0.0885 & 0.2955 & 0.6380 & 0.0847 & 0.1551 & 0.2237 & 0.1528 & 0.1711 & 0.1718 \\
& KMeans++ & 16 & 0.1270 & 0.3110 & 0.6130 & 0.1133 & 0.1749 & 0.2362 & 0.1919 & 0.2037 & 0.2044 \\
& & 32 & 0.1245 & 0.3110 & 0.5685 & 0.1168 & 0.1809 & 0.2341 & 0.2081 & 0.2229 & 0.2237 \\
\cmidrule{2-12}
& & 8 & 0.1210 & 0.3445 & 0.6810 & 0.1146 & 0.1889 & 0.2567 & 0.1964 & 0.2120 & 0.2127 \\
& Ward & 16 & 0.1290 & 0.3685 & 0.6690 & 0.1191 & 0.2004 & 0.2604 & 0.2059 & 0.2207 & 0.2211 \\
& & 32 & 0.1430 & 0.3475 & 0.6085 & 0.1373 & 0.2065 & 0.2598 & 0.2504 & 0.2617 & 0.2626 \\
\cmidrule{2-12}
& MUVERA & - & 0.0745 & 0.2515 & 0.4870 & 0.0663 & 0.1255 & 0.1742 & 0.1120 & 0.1265 & 0.1278 \\
\midrule
SciFact & Full MaxSim & - & 0.8446 & 0.9560 & 0.9967 & 0.7446 & 0.7700 & 0.7753 & 0.7200 & 0.7248 & 0.7250 \\
\cmidrule{2-12}
& & 8 & 0.7900 & 0.9360 & 0.9800 & 0.6468 & 0.6789 & 0.6844 & 0.6079 & 0.6136 & 0.6137 \\
& EigenLI & 16 & 0.8034 & 0.9460 & 0.9800 & 0.6767 & 0.7076 & 0.7120 & 0.6410 & 0.6467 & 0.6468 \\
& & 32 & 0.7978 & 0.9277 & 0.9767 & 0.6777 & 0.7068 & 0.7132 & 0.6431 & 0.6483 & 0.6486 \\
\cmidrule{2-12}
& & 8 & 0.7189 & 0.8930 & 0.9700 & 0.5691 & 0.6100 & 0.6197 & 0.5319 & 0.5396 & 0.5399 \\
& KMeans++ & 16 & 0.7783 & 0.9132 & 0.9733 & 0.6356 & 0.6657 & 0.6737 & 0.5975 & 0.6027 & 0.6030 \\
& & 32 & 0.7777 & 0.9170 & 0.9867 & 0.6528 & 0.6823 & 0.6913 & 0.6198 & 0.6244 & 0.6247 \\
\cmidrule{2-12}
& & 8 & 0.7733 & 0.9150 & 0.9767 & 0.6340 & 0.6665 & 0.6745 & 0.5955 & 0.6018 & 0.6021 \\
& Ward & 16 & 0.8094 & 0.9450 & 0.9800 & 0.6699 & 0.6997 & 0.7041 & 0.6290 & 0.6342 & 0.6344 \\
& & 32 & 0.8218 & 0.9450 & 0.9867 & 0.7044 & 0.7311 & 0.7364 & 0.6702 & 0.6748 & 0.6750 \\
\cmidrule{2-12}
& MUVERA & - & 0.7643 & 0.9193 & 0.9760 & 0.6198 & 0.6548 & 0.6620 & 0.5843 & 0.5900 & 0.5903 \\
\midrule
TREC-COVID & Full MaxSim & - & 0.8800 & 0.6730 & 0.5412 & 0.8348 & 0.6470 & 0.5690 & 0.9800 & 0.9800 & 0.9800 \\
\cmidrule{2-12}
& & 8 & 0.5400 & 0.3668 & 0.3090 & 0.5167 & 0.3658 & 0.3294 & 0.7852 & 0.7864 & 0.7864 \\
& EigenLI & 16 & 0.5660 & 0.3864 & 0.3416 & 0.5396 & 0.3824 & 0.3611 & 0.8052 & 0.8052 & 0.8052 \\
& & 32 & 0.7380 & 0.5490 & 0.4406 & 0.7143 & 0.5324 & 0.4674 & 0.9290 & 0.9290 & 0.9290 \\
\cmidrule{2-12}
& & 8 & 0.4740 & 0.2986 & 0.2546 & 0.4468 & 0.3004 & 0.2711 & 0.7328 & 0.7397 & 0.7397 \\
& KMeans++ & 16 & 0.4960 & 0.3318 & 0.2748 & 0.4701 & 0.3295 & 0.2947 & 0.7125 & 0.7125 & 0.7125 \\
& & 32 & 0.6080 & 0.4198 & 0.3472 & 0.6022 & 0.4182 & 0.3717 & 0.8773 & 0.8789 & 0.8789 \\
\cmidrule{2-12}
& & 8 & 0.5320 & 0.3788 & 0.3128 & 0.5172 & 0.3766 & 0.3346 & 0.8073 & 0.8080 & 0.8080 \\
& Ward & 16 & 0.5740 & 0.3994 & 0.3228 & 0.5472 & 0.3939 & 0.3481 & 0.7766 & 0.7766 & 0.7766 \\
& & 32 & 0.6340 & 0.4554 & 0.3783 & 0.6189 & 0.4503 & 0.4043 & 0.8653 & 0.8670 & 0.8670 \\
\cmidrule{2-12}
& MUVERA & - & 0.3500 & 0.1538 & 0.0967 & 0.3398 & 0.1722 & 0.1180 & 0.5888 & 0.5923 & 0.5923 \\
\midrule
Webis-Touche2020 & Full MaxSim & - & 0.2508 & 0.5070 & 0.8579 & 0.2609 & 0.3954 & 0.5093 & 0.4999 & 0.5109 & 0.5109 \\
\cmidrule{2-12}
& & 8 & 0.1142 & 0.3124 & 0.6593 & 0.1015 & 0.2080 & 0.3189 & 0.1874 & 0.2035 & 0.2040 \\
& EigenLI & 16 & 0.1407 & 0.3314 & 0.6933 & 0.1336 & 0.2312 & 0.3457 & 0.2745 & 0.2919 & 0.2923 \\
& & 32 & 0.2035 & 0.4221 & 0.7540 & 0.2026 & 0.3139 & 0.4240 & 0.4598 & 0.4677 & 0.4677 \\
\cmidrule{2-12}
& & 8 & 0.0673 & 0.2132 & 0.5573 & 0.0638 & 0.1338 & 0.2450 & 0.1336 & 0.1530 & 0.1539 \\
& KMeans++ & 16 & 0.0879 & 0.2432 & 0.5555 & 0.0851 & 0.1572 & 0.2598 & 0.1993 & 0.2183 & 0.2188 \\
& & 32 & 0.1427 & 0.3210 & 0.6372 & 0.1354 & 0.2184 & 0.3211 & 0.3342 & 0.3467 & 0.3468 \\
\cmidrule{2-12}
& & 8 & 0.1087 & 0.2960 & 0.6421 & 0.0944 & 0.1928 & 0.3053 & 0.1698 & 0.1822 & 0.1833 \\
& Ward & 16 & 0.1062 & 0.2862 & 0.6032 & 0.0998 & 0.1900 & 0.2929 & 0.2150 & 0.2339 & 0.2347 \\
& & 32 & 0.1728 & 0.3380 & 0.6725 & 0.1676 & 0.2449 & 0.3537 & 0.3974 & 0.4096 & 0.4098 \\
\cmidrule{2-12}
& MUVERA & - & 0.0297 & 0.1026 & 0.3689 & 0.0257 & 0.0585 & 0.1361 & 0.0700 & 0.0891 & 0.0902 \\
\midrule
\end{longtable}
\endgroup

\clearpage

\begingroup
\scriptsize
\setlength{\tabcolsep}{2.5pt}
\begin{longtable}{llc|rrr|rrr|rrr}
\caption{GTE-ModernColBERT BEIR effectiveness results; results use centered MUVERA.}\label{tab:gte_all} \\
\toprule
Dataset & Method & $k$ & R@10 & R@100 & R@1000 & n@10 & n@100 & n@1000 & M@10 & M@100 & M@1000 \\
\midrule
\endfirsthead
\multicolumn{12}{c}{\tablename\ \thetable{} -- continued} \\
\toprule
Dataset & Method & $k$ & R@10 & R@100 & R@1000 & n@10 & n@100 & n@1000 & M@10 & M@100 & M@1000 \\
\midrule
\endhead
\midrule
\multicolumn{12}{r}{Continued on next page} \\
\endfoot
\bottomrule
\endlastfoot
ArguAna & Full MaxSim & - & 0.6933 & 0.9533 & 1.0000 & 0.4474 & 0.5048 & 0.5105 & 0.3700 & 0.3825 & 0.3827 \\
\cmidrule{2-12}
& & 8 & 0.7467 & 0.9633 & 0.9967 & 0.4799 & 0.5280 & 0.5320 & 0.3969 & 0.4076 & 0.4077 \\
& EigenLI & 16 & 0.7433 & 0.9600 & 0.9967 & 0.5030 & 0.5506 & 0.5554 & 0.4273 & 0.4375 & 0.4377 \\
& & 32 & 0.7167 & 0.9633 & 1.0000 & 0.4791 & 0.5323 & 0.5370 & 0.4042 & 0.4151 & 0.4153 \\
\cmidrule{2-12}
& & 8 & 0.6033 & 0.9267 & 0.9833 & 0.3700 & 0.4405 & 0.4480 & 0.2981 & 0.3131 & 0.3135 \\
& KMeans++ & 16 & 0.7100 & 0.9400 & 0.9867 & 0.4382 & 0.4874 & 0.4934 & 0.3539 & 0.3639 & 0.3641 \\
& & 32 & 0.7233 & 0.9600 & 0.9933 & 0.4705 & 0.5238 & 0.5280 & 0.3915 & 0.4036 & 0.4038 \\
\cmidrule{2-12}
& & 8 & 0.7100 & 0.9500 & 0.9933 & 0.4470 & 0.4998 & 0.5052 & 0.3654 & 0.3769 & 0.3771 \\
& Ward & 16 & 0.7367 & 0.9667 & 0.9933 & 0.4856 & 0.5359 & 0.5392 & 0.4069 & 0.4177 & 0.4178 \\
& & 32 & 0.7267 & 0.9567 & 0.9933 & 0.4874 & 0.5390 & 0.5436 & 0.4115 & 0.4230 & 0.4232 \\
\cmidrule{2-12}
& MUVERA & - & 0.6467 & 0.8967 & 0.9767 & 0.4233 & 0.4788 & 0.4886 & 0.3536 & 0.3659 & 0.3662 \\
\midrule
Climate-FEVER & Full MaxSim & - & 0.4637 & 0.7009 & 0.8767 & 0.3873 & 0.4555 & 0.4869 & 0.4703 & 0.4788 & 0.4790 \\
\cmidrule{2-12}
& & 8 & 0.3622 & 0.5975 & 0.7912 & 0.2891 & 0.3573 & 0.3906 & 0.3593 & 0.3695 & 0.3700 \\
& EigenLI & 16 & 0.3411 & 0.5870 & 0.7783 & 0.2753 & 0.3446 & 0.3783 & 0.3421 & 0.3547 & 0.3552 \\
& & 32 & 0.3629 & 0.5930 & 0.7910 & 0.2957 & 0.3610 & 0.3957 & 0.3689 & 0.3791 & 0.3796 \\
\cmidrule{2-12}
& & 8 & 0.3113 & 0.5623 & 0.7699 & 0.2166 & 0.2863 & 0.3221 & 0.2370 & 0.2472 & 0.2478 \\
& KMeans++ & 16 & 0.3704 & 0.6106 & 0.7891 & 0.2789 & 0.3473 & 0.3797 & 0.3252 & 0.3333 & 0.3338 \\
& & 32 & 0.3622 & 0.6187 & 0.8057 & 0.2770 & 0.3506 & 0.3839 & 0.3318 & 0.3414 & 0.3418 \\
\cmidrule{2-12}
& & 8 & 0.3798 & 0.6237 & 0.8178 & 0.2976 & 0.3688 & 0.4030 & 0.3664 & 0.3752 & 0.3756 \\
& Ward & 16 & 0.3932 & 0.6472 & 0.8221 & 0.3090 & 0.3816 & 0.4122 & 0.3681 & 0.3771 & 0.3775 \\
& & 32 & 0.4257 & 0.6529 & 0.8138 & 0.3401 & 0.4065 & 0.4352 & 0.4040 & 0.4115 & 0.4117 \\
\cmidrule{2-12}
& MUVERA & - & 0.0381 & 0.0980 & 0.1547 & 0.0305 & 0.0450 & 0.0533 & 0.0478 & 0.0511 & 0.0515 \\
\midrule
DBPedia-Entity & Full MaxSim & - & 0.5576 & 0.6864 & 0.8835 & 0.5601 & 0.6120 & 0.6733 & 0.8591 & 0.8596 & 0.8597 \\
\cmidrule{2-12}
& & 8 & 0.4566 & 0.5643 & 0.7908 & 0.4541 & 0.4952 & 0.5684 & 0.7746 & 0.7754 & 0.7755 \\
& EigenLI & 16 & 0.4136 & 0.5903 & 0.8165 & 0.4327 & 0.5023 & 0.5760 & 0.7120 & 0.7158 & 0.7160 \\
& & 32 & 0.3605 & 0.5450 & 0.7887 & 0.3752 & 0.4548 & 0.5283 & 0.6324 & 0.6348 & 0.6350 \\
\cmidrule{2-12}
& & 8 & 0.4038 & 0.5226 & 0.7488 & 0.4117 & 0.4541 & 0.5244 & 0.7173 & 0.7186 & 0.7186 \\
& KMeans++ & 16 & 0.4471 & 0.5956 & 0.8255 & 0.4704 & 0.5249 & 0.6001 & 0.8132 & 0.8153 & 0.8153 \\
& & 32 & 0.5231 & 0.6454 & 0.8647 & 0.5314 & 0.5796 & 0.6489 & 0.8050 & 0.8054 & 0.8054 \\
\cmidrule{2-12}
& & 8 & 0.4435 & 0.5645 & 0.8024 & 0.4557 & 0.4973 & 0.5724 & 0.7568 & 0.7581 & 0.7581 \\
& Ward & 16 & 0.4821 & 0.6314 & 0.8569 & 0.5023 & 0.5629 & 0.6327 & 0.8363 & 0.8368 & 0.8368 \\
& & 32 & 0.5246 & 0.6611 & 0.8719 & 0.5228 & 0.5745 & 0.6434 & 0.8065 & 0.8069 & 0.8070 \\
\cmidrule{2-12}
& MUVERA & - & 0.1216 & 0.1757 & 0.3260 & 0.1125 & 0.1375 & 0.1813 & 0.2319 & 0.2426 & 0.2433 \\
\midrule
FEVER & Full MaxSim & - & 0.9508 & 0.9725 & 0.9839 & 0.9206 & 0.9266 & 0.9285 & 0.9447 & 0.9448 & 0.9448 \\
\cmidrule{2-12}
& & 8 & 0.8613 & 0.9444 & 0.9667 & 0.6896 & 0.7095 & 0.7129 & 0.6610 & 0.6649 & 0.6650 \\
& EigenLI & 16 & 0.8425 & 0.9432 & 0.9695 & 0.6842 & 0.7078 & 0.7117 & 0.6574 & 0.6618 & 0.6619 \\
& & 32 & 0.8779 & 0.9505 & 0.9726 & 0.7464 & 0.7640 & 0.7674 & 0.7304 & 0.7335 & 0.7336 \\
\cmidrule{2-12}
& & 8 & 0.7068 & 0.8470 & 0.8972 & 0.5362 & 0.5684 & 0.5752 & 0.5021 & 0.5088 & 0.5090 \\
& KMeans++ & 16 & 0.7878 & 0.9237 & 0.9563 & 0.6106 & 0.6423 & 0.6469 & 0.5769 & 0.5834 & 0.5835 \\
& & 32 & 0.8559 & 0.9487 & 0.9692 & 0.7093 & 0.7313 & 0.7344 & 0.6902 & 0.6942 & 0.6942 \\
\cmidrule{2-12}
& & 8 & 0.8781 & 0.9502 & 0.9647 & 0.6889 & 0.7063 & 0.7085 & 0.6556 & 0.6588 & 0.6589 \\
& Ward & 16 & 0.8793 & 0.9534 & 0.9697 & 0.7274 & 0.7454 & 0.7479 & 0.7067 & 0.7099 & 0.7099 \\
& & 32 & 0.9280 & 0.9623 & 0.9744 & 0.8286 & 0.8374 & 0.8393 & 0.8288 & 0.8298 & 0.8298 \\
\cmidrule{2-12}
& MUVERA & - & 0.2502 & 0.3921 & 0.5751 & 0.1829 & 0.2122 & 0.2350 & 0.1687 & 0.1743 & 0.1750 \\
\midrule
FiQA & Full MaxSim & - & 0.5204 & 0.7546 & 0.9171 & 0.4547 & 0.5177 & 0.5446 & 0.5401 & 0.5466 & 0.5469 \\
\cmidrule{2-12}
& & 8 & 0.4184 & 0.6673 & 0.8673 & 0.3533 & 0.4188 & 0.4514 & 0.4293 & 0.4376 & 0.4382 \\
& EigenLI & 16 & 0.4658 & 0.7163 & 0.8873 & 0.4002 & 0.4654 & 0.4931 & 0.4727 & 0.4806 & 0.4810 \\
& & 32 & 0.4722 & 0.7229 & 0.8892 & 0.4052 & 0.4705 & 0.4980 & 0.4842 & 0.4916 & 0.4921 \\
\cmidrule{2-12}
& & 8 & 0.2980 & 0.5637 & 0.8150 & 0.2422 & 0.3108 & 0.3507 & 0.2931 & 0.3036 & 0.3043 \\
& KMeans++ & 16 & 0.3945 & 0.6268 & 0.8446 & 0.3197 & 0.3796 & 0.4142 & 0.3802 & 0.3876 & 0.3882 \\
& & 32 & 0.4464 & 0.6882 & 0.8791 & 0.3760 & 0.4398 & 0.4707 & 0.4469 & 0.4546 & 0.4551 \\
\cmidrule{2-12}
& & 8 & 0.3713 & 0.6249 & 0.8409 & 0.3020 & 0.3689 & 0.4036 & 0.3663 & 0.3762 & 0.3768 \\
& Ward & 16 & 0.4458 & 0.7035 & 0.8821 & 0.3720 & 0.4400 & 0.4687 & 0.4419 & 0.4504 & 0.4509 \\
& & 32 & 0.4962 & 0.7341 & 0.8996 & 0.4205 & 0.4848 & 0.5119 & 0.4916 & 0.5003 & 0.5007 \\
\cmidrule{2-12}
& MUVERA & - & 0.2376 & 0.4353 & 0.6854 & 0.1818 & 0.2325 & 0.2714 & 0.2137 & 0.2226 & 0.2234 \\
\midrule
HotpotQA & Full MaxSim & - & 0.8014 & 0.8974 & 0.9518 & 0.7870 & 0.8121 & 0.8204 & 0.9386 & 0.9392 & 0.9392 \\
\cmidrule{2-12}
& & 8 & 0.5864 & 0.7233 & 0.8244 & 0.5547 & 0.5898 & 0.6049 & 0.7100 & 0.7137 & 0.7139 \\
& EigenLI & 16 & 0.6436 & 0.7817 & 0.8730 & 0.6091 & 0.6445 & 0.6583 & 0.7639 & 0.7668 & 0.7670 \\
& & 32 & 0.6792 & 0.8171 & 0.8993 & 0.6501 & 0.6855 & 0.6980 & 0.8148 & 0.8170 & 0.8170 \\
\cmidrule{2-12}
& & 8 & 0.5418 & 0.6797 & 0.7874 & 0.5067 & 0.5420 & 0.5581 & 0.6521 & 0.6562 & 0.6564 \\
& KMeans++ & 16 & 0.6209 & 0.7483 & 0.8394 & 0.5937 & 0.6264 & 0.6401 & 0.7522 & 0.7553 & 0.7555 \\
& & 32 & 0.7071 & 0.8207 & 0.8915 & 0.6853 & 0.7149 & 0.7256 & 0.8496 & 0.8513 & 0.8514 \\
\cmidrule{2-12}
& & 8 & 0.6126 & 0.7322 & 0.8205 & 0.5848 & 0.6158 & 0.6290 & 0.7439 & 0.7471 & 0.7472 \\
& Ward & 16 & 0.6782 & 0.7918 & 0.8673 & 0.6549 & 0.6844 & 0.6958 & 0.8156 & 0.8177 & 0.8178 \\
& & 32 & 0.7356 & 0.8403 & 0.9090 & 0.7157 & 0.7429 & 0.7533 & 0.8776 & 0.8789 & 0.8789 \\
\cmidrule{2-12}
& MUVERA & - & 0.4684 & 0.5934 & 0.6961 & 0.4531 & 0.4847 & 0.5000 & 0.6186 & 0.6234 & 0.6237 \\
\midrule
MS MARCO & Full MaxSim & - & 0.6864 & 0.9314 & 0.9900 & 0.4612 & 0.5147 & 0.5225 & 0.3938 & 0.4045 & 0.4047 \\
\cmidrule{2-12}
& & 8 & 0.5783 & 0.8664 & 0.9747 & 0.3716 & 0.4332 & 0.4474 & 0.3104 & 0.3223 & 0.3228 \\
& EigenLI & 16 & 0.6138 & 0.8916 & 0.9826 & 0.4021 & 0.4620 & 0.4740 & 0.3391 & 0.3510 & 0.3514 \\
& & 32 & 0.6131 & 0.8828 & 0.9811 & 0.3998 & 0.4576 & 0.4705 & 0.3358 & 0.3470 & 0.3475 \\
\cmidrule{2-12}
& & 8 & 0.4854 & 0.7847 & 0.9351 & 0.3119 & 0.3758 & 0.3950 & 0.2610 & 0.2735 & 0.2742 \\
& KMeans++ & 16 & 0.5925 & 0.8698 & 0.9753 & 0.3866 & 0.4462 & 0.4600 & 0.3257 & 0.3374 & 0.3379 \\
& & 32 & 0.6646 & 0.9172 & 0.9870 & 0.4404 & 0.4953 & 0.5047 & 0.3733 & 0.3843 & 0.3846 \\
\cmidrule{2-12}
& & 8 & 0.5717 & 0.8631 & 0.9711 & 0.3689 & 0.4314 & 0.4456 & 0.3091 & 0.3214 & 0.3219 \\
& Ward & 16 & 0.6430 & 0.9055 & 0.9834 & 0.4223 & 0.4791 & 0.4894 & 0.3567 & 0.3678 & 0.3681 \\
& & 32 & 0.6784 & 0.9222 & 0.9887 & 0.4501 & 0.5032 & 0.5122 & 0.3821 & 0.3926 & 0.3930 \\
\cmidrule{2-12}
& MUVERA & - & 0.3678 & 0.6343 & 0.8398 & 0.2322 & 0.2875 & 0.3135 & 0.1928 & 0.2031 & 0.2040 \\
\midrule
NQ & Full MaxSim & - & 0.8428 & 0.9733 & 0.9978 & 0.6225 & 0.6533 & 0.6566 & 0.5713 & 0.5767 & 0.5768 \\
\cmidrule{2-12}
& & 8 & 0.6172 & 0.8683 & 0.9561 & 0.4159 & 0.4715 & 0.4831 & 0.3712 & 0.3802 & 0.3806 \\
& EigenLI & 16 & 0.6633 & 0.9178 & 0.9844 & 0.4676 & 0.5279 & 0.5364 & 0.4241 & 0.4353 & 0.4356 \\
& & 32 & 0.7222 & 0.9217 & 0.9794 & 0.4983 & 0.5435 & 0.5517 & 0.4467 & 0.4545 & 0.4548 \\
\cmidrule{2-12}
& & 8 & 0.4478 & 0.7194 & 0.8722 & 0.2842 & 0.3442 & 0.3638 & 0.2495 & 0.2605 & 0.2611 \\
& KMeans++ & 16 & 0.5617 & 0.8372 & 0.9328 & 0.3763 & 0.4383 & 0.4512 & 0.3357 & 0.3470 & 0.3474 \\
& & 32 & 0.6978 & 0.9117 & 0.9644 & 0.5223 & 0.5719 & 0.5790 & 0.4868 & 0.4966 & 0.4968 \\
\cmidrule{2-12}
& & 8 & 0.5633 & 0.8006 & 0.9383 & 0.3753 & 0.4289 & 0.4470 & 0.3347 & 0.3442 & 0.3448 \\
& Ward & 16 & 0.6683 & 0.9078 & 0.9694 & 0.4736 & 0.5270 & 0.5350 & 0.4291 & 0.4378 & 0.4380 \\
& & 32 & 0.7567 & 0.9578 & 0.9844 & 0.5516 & 0.5965 & 0.5999 & 0.5064 & 0.5136 & 0.5136 \\
\cmidrule{2-12}
& MUVERA & - & 0.3300 & 0.6089 & 0.8094 & 0.2010 & 0.2611 & 0.2871 & 0.1709 & 0.1829 & 0.1836 \\
\midrule
Quora & Full MaxSim & - & 0.9508 & 0.9945 & 0.9996 & 0.8610 & 0.8739 & 0.8751 & 0.8473 & 0.8486 & 0.8486 \\
\cmidrule{2-12}
& & 8 & 0.9448 & 0.9933 & 0.9995 & 0.8607 & 0.8749 & 0.8762 & 0.8512 & 0.8527 & 0.8527 \\
& EigenLI & 16 & 0.9285 & 0.9902 & 0.9991 & 0.8258 & 0.8435 & 0.8453 & 0.8087 & 0.8108 & 0.8108 \\
& & 32 & 0.8428 & 0.9609 & 0.9909 & 0.7048 & 0.7352 & 0.7401 & 0.6776 & 0.6818 & 0.6819 \\
\cmidrule{2-12}
& & 8 & 0.9358 & 0.9914 & 0.9988 & 0.8476 & 0.8636 & 0.8652 & 0.8378 & 0.8394 & 0.8394 \\
& KMeans++ & 16 & 0.9501 & 0.9943 & 0.9996 & 0.8612 & 0.8742 & 0.8753 & 0.8478 & 0.8492 & 0.8492 \\
& & 32 & 0.9508 & 0.9944 & 0.9996 & 0.8613 & 0.8743 & 0.8755 & 0.8479 & 0.8491 & 0.8491 \\
\cmidrule{2-12}
& & 8 & 0.9449 & 0.9926 & 0.9993 & 0.8603 & 0.8743 & 0.8757 & 0.8503 & 0.8517 & 0.8518 \\
& Ward & 16 & 0.9501 & 0.9944 & 0.9995 & 0.8616 & 0.8747 & 0.8759 & 0.8483 & 0.8496 & 0.8496 \\
& & 32 & 0.9508 & 0.9945 & 0.9996 & 0.8613 & 0.8743 & 0.8754 & 0.8478 & 0.8491 & 0.8491 \\
\cmidrule{2-12}
& MUVERA & - & 0.7648 & 0.9246 & 0.9797 & 0.6187 & 0.6577 & 0.6662 & 0.5922 & 0.5982 & 0.5984 \\
\midrule
SCIDOCS & Full MaxSim & - & 0.1795 & 0.4260 & 0.6835 & 0.1767 & 0.2611 & 0.3139 & 0.3124 & 0.3244 & 0.3251 \\
\cmidrule{2-12}
& & 8 & 0.1435 & 0.3700 & 0.6830 & 0.1327 & 0.2085 & 0.2725 & 0.2209 & 0.2333 & 0.2340 \\
& EigenLI & 16 & 0.1575 & 0.4020 & 0.6685 & 0.1496 & 0.2324 & 0.2868 & 0.2500 & 0.2666 & 0.2671 \\
& & 32 & 0.1635 & 0.3865 & 0.6323 & 0.1561 & 0.2321 & 0.2834 & 0.2778 & 0.2877 & 0.2888 \\
\cmidrule{2-12}
& & 8 & 0.1175 & 0.2700 & 0.6145 & 0.0985 & 0.1495 & 0.2192 & 0.1516 & 0.1626 & 0.1636 \\
& KMeans++ & 16 & 0.1490 & 0.3100 & 0.6490 & 0.1359 & 0.1895 & 0.2582 & 0.2162 & 0.2235 & 0.2247 \\
& & 32 & 0.1775 & 0.3720 & 0.6605 & 0.1643 & 0.2295 & 0.2891 & 0.2668 & 0.2746 & 0.2753 \\
\cmidrule{2-12}
& & 8 & 0.1350 & 0.3340 & 0.6610 & 0.1300 & 0.1983 & 0.2640 & 0.2252 & 0.2402 & 0.2409 \\
& Ward & 16 & 0.1655 & 0.3660 & 0.6590 & 0.1558 & 0.2244 & 0.2845 & 0.2587 & 0.2712 & 0.2719 \\
& & 32 & 0.1715 & 0.4040 & 0.6820 & 0.1652 & 0.2444 & 0.3018 & 0.2757 & 0.2867 & 0.2875 \\
\cmidrule{2-12}
& MUVERA & - & 0.1005 & 0.2455 & 0.4513 & 0.1004 & 0.1501 & 0.1927 & 0.1936 & 0.2038 & 0.2050 \\
\midrule
SciFact & Full MaxSim & - & 0.8594 & 0.9633 & 0.9967 & 0.7510 & 0.7759 & 0.7801 & 0.7260 & 0.7309 & 0.7310 \\
\cmidrule{2-12}
& & 8 & 0.8251 & 0.9433 & 0.9900 & 0.6932 & 0.7193 & 0.7255 & 0.6582 & 0.6627 & 0.6629 \\
& EigenLI & 16 & 0.8389 & 0.9450 & 0.9867 & 0.7107 & 0.7340 & 0.7394 & 0.6763 & 0.6801 & 0.6803 \\
& & 32 & 0.8424 & 0.9477 & 0.9900 & 0.7177 & 0.7409 & 0.7465 & 0.6871 & 0.6910 & 0.6912 \\
\cmidrule{2-12}
& & 8 & 0.7833 & 0.9333 & 0.9933 & 0.6189 & 0.6544 & 0.6622 & 0.5737 & 0.5803 & 0.5807 \\
& KMeans++ & 16 & 0.8398 & 0.9533 & 0.9967 & 0.6906 & 0.7153 & 0.7208 & 0.6497 & 0.6532 & 0.6534 \\
& & 32 & 0.8578 & 0.9633 & 0.9967 & 0.7294 & 0.7534 & 0.7576 & 0.6968 & 0.7007 & 0.7008 \\
\cmidrule{2-12}
& & 8 & 0.8511 & 0.9467 & 0.9933 & 0.7076 & 0.7289 & 0.7348 & 0.6688 & 0.6726 & 0.6728 \\
& Ward & 16 & 0.8478 & 0.9600 & 0.9967 & 0.7067 & 0.7321 & 0.7367 & 0.6684 & 0.6726 & 0.6727 \\
& & 32 & 0.8684 & 0.9700 & 0.9967 & 0.7482 & 0.7709 & 0.7742 & 0.7178 & 0.7216 & 0.7217 \\
\cmidrule{2-12}
& MUVERA & - & 0.7573 & 0.8777 & 0.9750 & 0.6074 & 0.6346 & 0.6468 & 0.5688 & 0.5736 & 0.5740 \\
\midrule
TREC-COVID & Full MaxSim & - & 0.8980 & 0.6830 & 0.5532 & 0.8515 & 0.6513 & 0.5812 & 0.9800 & 0.9800 & 0.9800 \\
\cmidrule{2-12}
& & 8 & 0.5320 & 0.3350 & 0.2532 & 0.5136 & 0.3366 & 0.2814 & 0.7750 & 0.7750 & 0.7750 \\
& EigenLI & 16 & 0.6280 & 0.4402 & 0.3546 & 0.5918 & 0.4298 & 0.3807 & 0.7952 & 0.7970 & 0.7970 \\
& & 32 & 0.7220 & 0.5594 & 0.4627 & 0.6593 & 0.5297 & 0.4791 & 0.8206 & 0.8206 & 0.8206 \\
\cmidrule{2-12}
& & 8 & 0.4540 & 0.2392 & 0.1602 & 0.4436 & 0.2535 & 0.1888 & 0.7937 & 0.7958 & 0.7958 \\
& KMeans++ & 16 & 0.5460 & 0.3146 & 0.2224 & 0.5203 & 0.3226 & 0.2537 & 0.7757 & 0.7772 & 0.7772 \\
& & 32 & 0.7080 & 0.4616 & 0.3436 & 0.6714 & 0.4584 & 0.3803 & 0.8850 & 0.8850 & 0.8850 \\
\cmidrule{2-12}
& & 8 & 0.5160 & 0.3054 & 0.2212 & 0.5060 & 0.3137 & 0.2513 & 0.8062 & 0.8084 & 0.8084 \\
& Ward & 16 & 0.6800 & 0.4216 & 0.3003 & 0.6415 & 0.4231 & 0.3381 & 0.8473 & 0.8473 & 0.8473 \\
& & 32 & 0.8100 & 0.5480 & 0.4086 & 0.7540 & 0.5384 & 0.4493 & 0.9012 & 0.9012 & 0.9012 \\
\cmidrule{2-12}
& MUVERA & - & 0.5560 & 0.3390 & 0.2685 & 0.5174 & 0.3379 & 0.2897 & 0.7036 & 0.7055 & 0.7055 \\
\midrule
Webis-Touche2020 & Full MaxSim & - & 0.2977 & 0.5373 & 0.8453 & 0.3159 & 0.4404 & 0.5412 & 0.5582 & 0.5648 & 0.5648 \\
\cmidrule{2-12}
& & 8 & 0.1884 & 0.4254 & 0.7893 & 0.1867 & 0.3091 & 0.4314 & 0.3638 & 0.3793 & 0.3796 \\
& EigenLI & 16 & 0.2204 & 0.4534 & 0.8023 & 0.2028 & 0.3251 & 0.4429 & 0.3287 & 0.3480 & 0.3480 \\
& & 32 & 0.2915 & 0.5059 & 0.8224 & 0.2965 & 0.4045 & 0.5117 & 0.5279 & 0.5405 & 0.5405 \\
\cmidrule{2-12}
& & 8 & 0.1373 & 0.3635 & 0.7053 & 0.1400 & 0.2504 & 0.3648 & 0.3021 & 0.3223 & 0.3225 \\
& KMeans++ & 16 & 0.1717 & 0.3784 & 0.7520 & 0.1705 & 0.2712 & 0.3945 & 0.3609 & 0.3761 & 0.3763 \\
& & 32 & 0.2256 & 0.4582 & 0.7872 & 0.2305 & 0.3437 & 0.4546 & 0.4644 & 0.4765 & 0.4765 \\
\cmidrule{2-12}
& & 8 & 0.1819 & 0.3850 & 0.7605 & 0.1707 & 0.2748 & 0.4009 & 0.3719 & 0.3820 & 0.3823 \\
& Ward & 16 & 0.2371 & 0.4373 & 0.7809 & 0.2392 & 0.3406 & 0.4567 & 0.4501 & 0.4564 & 0.4564 \\
& & 32 & 0.2979 & 0.5115 & 0.8045 & 0.3144 & 0.4180 & 0.5164 & 0.5524 & 0.5650 & 0.5650 \\
\cmidrule{2-12}
& MUVERA & - & 0.1425 & 0.2905 & 0.6168 & 0.1463 & 0.2106 & 0.3168 & 0.3841 & 0.3964 & 0.3967 \\
\midrule
\end{longtable}
\endgroup

\end{document}